\documentclass[10pt,aps,pra,twocolumn,floatfix]{revtex4-2}

\usepackage{preamble}

\begin{document}
\title{Prior-Informed Adaptive Shifts for Sequential Minimal Optimization in Variational Quantum Eigensolvers}

\author{Frederik Stalschus\orcidlink{0009-0007-9196-0814}}
\email{frederik.stalschus@tu-berlin.de}
\affiliation{Berlin Institute for the Foundations of Learning and Data BIFOLD, Berlin, Germany}
\affiliation{Technische Universität Berlin, Berlin, Germany}

\author{Samuele Pedrielli\orcidlink{0009-0005-5973-2235}}
\affiliation{Berlin Institute for the Foundations of Learning and Data BIFOLD, Berlin, Germany}
\affiliation{Technische Universität Berlin, Berlin, Germany}
\affiliation{Padova University, Padova, Italy}

\author{Stefan Kühn\orcidlink{0000-0001-7693-350X}}
\affiliation{Deutsches Elektronen-Synchrotron DESY, Platanenallee 6, 15738 Zeuthen, Germany}

\author{Karl Jansen\orcidlink{0000-0002-1574-7591}}
\affiliation{Deutsches Elektronen-Synchrotron DESY, Platanenallee 6, 15738 Zeuthen, Germany}
\affiliation{Computation-Based Science and Technology Research Center, The Cyprus Institute, Nicosia, Cyprus}

\author{Kim A. Nicoli\orcidlink{0000-0001-5933-1822}}
\affiliation{Oldendorff Carriers GmbH $\&$ Co. KG, Lübeck, Germany}

\author{Shinichi Nakajima\orcidlink{0000-0003-3970-4569}}
\email{nakajima@tu-berlin.de}
\affiliation{Berlin Institute for the Foundations of Learning and Data BIFOLD, Berlin, Germany}
\affiliation{Technische Universität Berlin, Berlin, Germany}
\affiliation{RIKEN AIP, Tokyo, Japan}

\date{\today}

\begin{abstract}
Sequential minimal optimization methods, such as the Rotosolve and the Nakanishi--Fujii--Todo algorithm (NFT), are widely used for Variational Quantum Eigensolvers (VQEs). These methods optimize one parameter direction at a time, requiring measurements at only a few locations along that direction. In the presence of measurement shot noise, however, their performance depends critically on the choice of measurement locations, and recent studies suggest that equidistant measurements are optimal. However, we often observe that equidistant measurements are not always optimal in practice. We argue that this discrepancy between theory and practice arises from the fact that two assumptions underlying previous analyses do not generally hold: (1) the absence of prior knowledge about the energy minimizer, and (2) the use of the uncertainty of the estimated energy as a proxy for optimization performance. In this paper, we develop a new theory for determining optimal measurement locations. First, we show that incorporating prior information about the minimizer is beneficial. Early in optimization, when little is known about the pivot, i.e., the current minimizer, equidistant measurements are indeed near-optimal, but as the prior belief sharpens the optimal locations move away from equidistant. Second, rather than analyzing the uncertainty of the estimated minimum energy, we study the uncertainty of the estimator of the minimizer itself, which leads to substantially different strategies. Based on this analysis, we propose Prior-informed Adaptive Shifts (PAS), a method that automatically adjusts measurement locations during optimization. Numerical experiments across different shot counts and problems validate our theoretical findings and demonstrate that PAS adaptively recovers whichever fixed shift is best in each regime without it being specified in advance.
\end{abstract}

\maketitle

\section{\label{sec:introduction}Introduction}
Variational Quantum Eigensolvers (VQEs) are among the most promising applications of near-term quantum hardware, estimating the ground-state energy of a Hamiltonian by optimizing a parameterized quantum circuit \cite{peruzzo_variational_2014}. Applications include the simulation of quantum many-body dynamics \cite{smith_simulating_2019,fauseweh_quantum_2024}, lattice gauge theories \cite{meth_simulating_2025}, computation of molecular ground-states \cite{caditazi_folded_2024} or more general tasks such as combinatorial optimization \cite{moll_quantum_2018} and machine learning \cite{biamonte_quantum_2017}. The circuit runs on a quantum device and a classical optimizer adjusts its parameters, with the repeated energy measurements forming the main computational cost.

Each energy evaluation is estimated from a finite number of measurement shots and is therefore affected by statistical noise that shrinks as $\frac{1}{\sqrt{n}}$, where $n$ is the number of shots.
Since the total cost of a run is proportional to the number of shots taken, reducing the shots needed to reach a target accuracy translates directly into saving quantum resources. Unlike hardware imperfections, which active research aims to suppress \cite{cai_quantum_2023, acharya_quantum_2025, jiang_error_2024}, this shot noise is fundamental and persists even on fault-tolerant devices.

A widely used class of optimizers exploits the fact that, for common circuit ansätze, the energy, as a function of a single parameter, is a sinusoid fully determined by three coefficients \cite{nakanishi_sequential_2020, ostaszewski_structure_2021}. Sequential minimal optimization methods such as Rotosolve \cite{ostaszewski_structure_2021} and the Nakanishi--Fujii--Todo (NFT) algorithm \cite{nakanishi_sequential_2020} therefore optimize one parameter at a time, reconstructing this sinusoid from only three energy evaluations and moving to its minimum. In the noiseless limit the placement of these three evaluations is not relevant as long as they are distinct. Under shot noise, measurement locations become crucial as noise propagates through the reconstruction and its amplification depends on where the measurements are taken.

The shift angle $\alpha$ that controls this placement has thus far been treated as a fixed hyperparameter. \citet{endo_optimal_2023} analyzed how shot noise propagates to the estimated minimum \emph{energy} and recommended equidistant measurements spaced by $\frac{2\pi}{3}$, a choice that minimizes the condition number of the reconstruction and is supported by related analyses \cite{lai_interpolationbased_2026}. Other works adopt $\alpha = \frac{\pi}{2}$ without explicit justification \cite{nakanishi_sequential_2020,ostaszewski_structure_2021}. In practice, however, the equidistant choice is often not the best, and which fixed shift performs better depends on factors such as the problem and the shot budget.

We argue that this gap between theory and practice stems from two assumptions in prior analyses that do not generally hold. First, these analyses assume no prior knowledge of the minimizer. Yet sequential optimization accumulates exactly such knowledge as the optimization converges, since each parameter is revisited repeatedly and its estimates concentrate over time. Second, they take the uncertainty of the estimated energy as a proxy for optimization performance, whereas the quantity actually propagated between iterations is the estimated \emph{minimizer} of the energy, not the energy itself. The variance of the estimated minimizer depends on the shift through a different gradient than the energy variance, and leads to a different optimal shift.

Building on these two observations, we develop a new analysis of the optimal measurement locations and use it to design an adaptive method. Our contributions are as follows.

\begin{itemize}
    \item We place a von Mises prior over the location of the optimum and derive the expected variance of both the minimum energy and the minimizer estimators as functions of the shift $\alpha$ and the prior concentration. The equidistant recommendation of prior work is recovered as the special case of a uniform prior.

    \item We show that minimizing the variance of the minimizer, rather than the energy variance, yields a closed-form optimal shift that depends only on the prior concentration and is confined between $\frac{\pi}{2}$ and $\frac{2\pi}{3}$. The lower bound of $\frac{\pi}{2}$ is approached in the high-confidence limit and does not depend on the choice of prior.

    \item We propose Prior-informed Adaptive Shifts (PAS), which estimate the prior concentration from the history of past pivot estimates and sets the shift adaptively at each step, in global and gate-specific variants.

    \item On the TFIM and MaxCut problems, across high and low shot budgets, we show that PAS adaptively recovers whichever fixed shift performs best in each regime without it being specified in advance.
\end{itemize}

\section{\label{sec:background}Background}
\subsection{Variational Quantum Eigensolver}
\label{sec:vqe}

The Variational Quantum Eigensolver (VQE) \cite{peruzzo_variational_2014} is a hybrid quantum-classical protocol for estimating the ground-state energy of a given $Q$-qubit Hamiltonian $H$. A parameterized quantum circuit $U(\bftheta) = U_D \cdots U_1$, referred to as the ansatz, acts on an initial state $\rho_\mathrm{in}$ to produce the output state $\rho(\bftheta) = U(\bftheta) \rho_\mathrm{in} U^\dagger(\bftheta)$. While in general different choices for the specific structure are possible, we use a circuit consisting of either fixed entangling gates or parameterized gates of the form $U_d = \exp(i\frac{\theta_d}{2} O_d)$, where $\theta_d \in [-\pi,\pi)$ is a circuit parameter and $O_d$ is a Hermitian and unitary operator such as a tensor product of Pauli matrices. In this paper we only consider the case where each parameter enters at most one gate. The goal of VQE is to find

\begin{equation}
    \bftheta^* := \argmin_{\bftheta \in [-\pi,\pi)^D} \EH(\bftheta),
    \label{eq:vqe_min}
\end{equation}
where $\EH(\bftheta) = \Tr\left[H \rho(\bftheta) \right]$ is the expectation value of the Hamiltonian, which corresponds to the energy, and $D$ is the number of variational parameters. The optimal parameters $\bftheta^*$ map the initial state to the best approximation of the ground state achievable by the ansatz, i.e.,
\begin{equation}
    \rho(\bftheta^*) \approx \rho_\mathrm{GS}.
\end{equation}

In practice, the expectation value $\EH(\bftheta)$ cannot be evaluated exactly but must be estimated from a finite number of measurements. Each evaluation yields a noisy estimate
\begin{equation}
    f = \EH(\bftheta) + \epsilon_\mathrm{shots}, \quad \epsilon_\mathrm{shots} \sim \mathcal{N}\!\left(0,\, \frac{\sigma^2}{N_\mathrm{shots}}\right),
    \label{eq:noisy_obs}
\end{equation}
where $\sigma^2 = \mathrm{Var}_{\rho(\bftheta)}[H]$ is the variance of the Hamiltonian in the prepared state and $N_\mathrm{shots}$ is the number of measurement shots. 
We assume ideal, noiseless hardware throughout, where $\epsilon_\mathrm{shots}$ captures only the statistical noise from finite sampling, so $f$ is an unbiased estimate of $\EH(\bftheta)$.

While $\sigma$ in principle depends on $\bftheta$ \cite{mcclean_theory_2016,zhang_variational_2022}, we treat it as constant throughout this paper. Since the Pauli terms of $H$ generally do not commute, $H$ is measured by partitioning its Pauli terms into mutually commuting groups, each estimated in a separate set of circuit executions. Since these per-group estimates are statistically independent, the shot-noise variance of the energy estimate is governed by the sum of per-group variances $\sum_g \mathrm{Var}(H_g)$ rather than by $\mathrm{Var}(H)$. Thus, the inter-group covariances that make $\mathrm{Var}(H)$ vanish at eigenstates do not enter. Consequently the shot noise stays bounded away from zero even at eigenstates if $H$ decomposes into more than one commuting group. Rather than tracking this group- and $\bftheta$-dependence, we assume a constant $\sigma$, following \cite{nakanishi_sequential_2020, sung_using_2020, ostaszewski_structure_2021, tamiya_stochastic_2022, endo_optimal_2023, nicoli_physicsinformed_2023, anders_adaptive_2024}. This enables a closed-form analysis of the optimal measurement spacing. Incorporating the $\bftheta$-dependence of $\sigma$ into the measurement allocation is left for future work.

Since every energy evaluation consumes a fixed number of shots on a quantum device, the total cost of a VQE run is naturally expressed in terms of a shot budget. Reducing the number of shots required to reach a target accuracy in $\bftheta^*$ therefore directly translates to practical savings in quantum resources and is the main focus of this paper.

\subsection{Sequential Minimization for VQE}
\label{sec:seq_min}
For circuits of the form described above, it has been shown \cite{nakanishi_sequential_2020, ostaszewski_structure_2021} that the energy, viewed as a function of a single parameter $\theta_{d}$ while all other parameters $\bftheta_{-d}$ are kept fixed, reduces to a one-dimensional sinusoidal function
\begin{equation}
    \EH(\theta_d;\bftheta_{-d}) = b_{1,d} + \sqrt{2}\bigl(b_{2,d}\cos\theta_d + b_{3,d}\sin\theta_d\bigr),
    \label{eq:sinusoidal}
\end{equation}
which is uniquely determined by the three coefficients $\bfb_d = (b_{1,d},\, b_{2,d},\, b_{3,d})^T$ and minimized at
\begin{equation}
    \theta^*_d = \operatorname{atan2}(b_{3,d},\, b_{2,d}) + \pi,
    \label{eq:theta_opt}
\end{equation}
with a corresponding minimum energy of 
\begin{equation}
    f^*_d:= \EH(\theta_d^*;\,\bftheta_{-d}) = b_{1,d} - \sqrt{2}\sqrt{b_{2,d}^2 + b_{3,d}^2}.
     \label{eq:energy_opt}
\end{equation}

To determine $\bfb_d$, three energy evaluations at different parameter values suffice. We parameterize the measurement locations as $\bfPhi_d = (\varphi,\, \varphi+\alpha,\, \varphi-\alpha)$ where $\varphi \in \mbR$ is the pivot and $\alpha > 0$ is the angle between the measurements. 
This can be rewritten as the linear system
\begin{equation}
    \bff = A\bfb_d,
    \label{eq:linear_system}
\end{equation}
where $\bff = (f_1, f_2, f_3)^T$ are the three noisy energy observations and
\begin{equation}
    A = \begin{bmatrix}
        1 & \sqrt{2}\cos(\varphi) & \sqrt{2}\sin(\varphi) \\
        1 & \sqrt{2}\cos(\varphi + \alpha) & \sqrt{2}\sin(\varphi + \alpha) \\
        1 & \sqrt{2}\cos(\varphi - \alpha) & \sqrt{2}\sin(\varphi - \alpha)
    \end{bmatrix}.
    \label{eq:design_matrix}
\end{equation}

For $\alpha \in (0,\pi)$ the matrix $A$ is a nonsingular $3\times3$ matrix, so the coefficients are recovered by inversion, $\hat{\bfb}_d = A^{-1}\bff$.
Note that since the observations $\bff$ were obtained using a finite $N_\mathrm{shots}$ the estimator $\hat{\bfb}_d$ differs from the true coefficients $\bfb_d$.

Sequential minimization \cite{nakanishi_sequential_2020, ostaszewski_structure_2021} then proceeds by cycling through the $D$ parameters, at each step setting $\theta_d \leftarrow \hat{\theta}_d^*$, where $\hat{\theta}_d^*$ is obtained from \cref{eq:theta_opt} using the estimated coefficients $\hat{\bfb}_d$. Throughout this work, we consider the standard three-measurement procedure called Rotosolve by \cite{ostaszewski_structure_2021} depicted in \cref{alg:rotosolve}, where all observations in \cref{eq:linear_system} are obtained from fresh quantum measurements. This is in contrast to the work in \cite{nakanishi_sequential_2020}, which reuses the estimated minimum energy $\hat{f}^*_{d-1}$ from the previous parameter update as one of the three data points and thereby effectively carries out sequential minimal optimization (SMO) \cite{platt_sequential_1998}. While this effectively reduces the shot cost by $\frac{1}{3}$, it introduces correlated errors across sequential updates \cite{pedrielli_bias_2026}, which we do not address in this work.

\begin{algorithm}[H]
\small
\caption{Sequential Minimization for VQE}\label{alg:rotosolve}
\begin{algorithmic}
\Require Initial parameters $\bftheta \in [-\pi,\pi)^D$, shift $\alpha \in (0,\pi)$, number of measurements $N_\mathrm{shots}$
\Repeat
    \For{$d = 1, \dots, D$}
        \State Set pivot $\varphi \gets \theta_d$
        \State Measure $\bff \gets f(\bfPhi_d)$ using $N_\mathrm{shots}$ shots 
        \State Compute $\hat{\bfb}_d = A^{-1} \bff$ using \cref{eq:design_matrix}
        \State Compute $\hat{\theta}^*_d = \arctantwo(\hat{b}_{3,d},\, \hat{b}_{2,d})+\pi$ using \cref{eq:theta_opt}
        \State Update $\bftheta \gets (\hat{\theta}^*_d,\bftheta_{-d})$
    \EndFor
\Until{convergence or termination criterion is met, e.g.\ an exhausted shot budget}
\end{algorithmic}
\end{algorithm}

The offset $\alpha$ in \cref{eq:design_matrix} is a free parameter. Its choice affects the conditioning of the design matrix $A$ and, as we will show, the precision with which $\hat{\theta}^*_d$ can be estimated from noisy data. Determining the optimal $\alpha$ is the central question of this work.

\subsection{Prior Work on Measurement Shifts}
\label{sec:prior_work}

In the noiseless limit $N_\mathrm{shots} \to \infty$, the coefficients are recovered exactly $\bfb_d = \hat\bfb_d$ and the choice of shift $\alpha$ is irrelevant as long as the design matrix $A$ is invertible. With finite shots, however, the measurement noise $\bfepsilon_\mathrm{shots} \neq 0$ and propagates through the inverse to the estimated coefficients,
\begin{equation}
    \hat{\bfb}_d = A^{-1}\left(\EH(\bfPhi_d) + \bfepsilon_\mathrm{shots} \right) =  \bfb_d + A^{-1} \bfepsilon_\mathrm{shots},
    \label{eq:noisy_b}
\end{equation}
and the choice of $\alpha$ directly affects how this noise is amplified.

\citet{endo_optimal_2023} analyzed this propagation for the \emph{energy} estimate $\hat{f}^*_{d}$, showing that the variance of the estimated minimum energy depends on $\alpha$ through $A^{-1}$. Their analysis led to the recommendation of equally spaced shifts $\alpha = 2\pi/3$, which minimizes the condition number of $A$ and is optimal when the goal is to estimate $\hat{f}^*_{d}$ with minimal variance. Furthermore, \cite{lai_interpolationbased_2026} showed that, for equidistant frequencies, the equally spaced choice simultaneously minimizes the condition number, the mean squared error of the coefficient estimates, and the average variance of the reconstructed energy function.

However, in VQE the energy estimate at a single parameter update is not itself the quantity of interest. What matters for optimization is the \emph{minimizer} $\hat{\theta}^*_d$, since this determines the parameter update that is propagated to subsequent iterations. The variance of $\hat{\theta}^*_d$ involves a different gradient vector than the variance of $\hat{f}^*_d$, and consequently a different dependence on $\alpha$.

This distinction has not been addressed in prior work. \citet{anders_adaptive_2024} used $\alpha = 2\pi/3$ following the equal-spacing convention which minimizes the maximum uncertainty of the \emph{energy} estimate, while \cite{nakanishi_sequential_2020,ostaszewski_structure_2021} used $\alpha = \pi/2$ without explicit justification.

In the following section, we derive both the variance of $\hat{f}^*_d$ and  $\hat{\theta}^*_d$ as a function of $\alpha$ and show that the optimal shift depends on prior knowledge about the minimizer.

\section{\label{sec:analysis}Analysis of variances}
\subsection{Variance of minimum energy}
\label{sec:energy}

For notational brevity, we omit the subscript $d$ on $f$ and $\bfb$ throughout this section.
From \cref{eq:noisy_b}, the shot noise $\bfepsilon_\mathrm{shots}$ is mapped by the inverse $A^{-1}$ into the coefficient space. A first-order Taylor expansion of \cref{eq:energy_opt} around $\bfb$ then gives the estimated minimum energy as
\begin{equation}
    \hat f^*
    = 
    f^*
    +
    \bfv^T(A^{-1}\bfepsilon_\mathrm{shots})
    + 
    O_p(N_{\mathrm{shots}}^{-1}),
    \label{eq:energy_taylor}
\end{equation}
where the gradient
\begin{equation}
    \bfv = \left(1,\;\frac{-\sqrt{2} b_2}{r},\; \frac{-\sqrt{2} b_3}{r} \right)^T
    \label{eq:gradient_v}
\end{equation}
depends on the sinusoidal amplitude $r = \sqrt{b_2^2 + b_3^2}$. See \cref{app:gradient_v} for the full derivation.

The variance of the estimated minimum energy can then be expressed using \cref{eq:energy_taylor} (see \cref{app:var_general} for details) as
\begin{equation}
    \Var(\hat f^*)
    =
    \frac{\sigma^2}{N_\mathrm{shots}} \bfv^T P \bfv
    +
    O\!\left(N_\mathrm{shots}^{-2}\right),
    \label{eq:var_energy}
\end{equation}
where $P = (A^TA)^{-1}$. The variance depends on the shot noise as well as the geometry of spaced observations and the noise propagation gradient. Consequently, even under our assumption of constant $\sigma^2$, $\Var(\hat f^*)$ cannot be evaluated a priori, since it depends on the unknown coefficients $\bfb$ (via $\bfv$), which are only accessible after observing the energy function. To make the variance usable in practice we need a surrogate for $\bfb$. Iterative sequential optimization supplies one, since successive minimizer estimates concentrate around one another as optimization proceeds. This concentration is what we actually estimate, and it justifies a correspondingly sharper estimate of $\bfb$ and $\bfv$.

Since $v_1$ is constant and $\left(v_2, v_3\right)$ lies on a circle with radius $\sqrt{2}$ due to $v_2^2 + v_3^2 = 2$, we can write $v_2 = \sqrt{2}\cos(\theta^*)$ and $v_3 = \sqrt{2}\sin(\theta^*)$, using $\cos(\theta^*) = -b_2/r$ and $\sin(\theta^*) = -b_3/r$ from \cref{eq:theta_opt}. The gradient is therefore parameterized directly by the minimizer, $\bfv(\theta^*) = \left(1, \sqrt{2}\cos(\theta^*), \sqrt{2}\sin(\theta^*)\right)^T$, $\theta^* \in [-\pi,\pi)$.
We model our belief over $\theta^*$ with the von Mises distribution, which can be considered as the circular analogue to a Gaussian on the unit circle \cite{hillen_moments_2017},
\begin{equation}
    p_\mathrm{VM}(\theta^* | \mu, \kappa) = \frac{e^{\kappa \cos(\theta^* - \mu)}}{2\pi I_0(\kappa)}.
    \label{eq:vonmises_pdf}
\end{equation}
Here $\mu \in [-\pi,\pi)$ is its mean, $\kappa \ge 0$ the concentration which can be thought of loosely as the inverse of the variance for large $\kappa$ and $I_0(\kappa)$ is the modified Bessel function of zeroth-order. The von Mises prior reduces to a uniform prior with $\kappa=0$ and as $\kappa$ increases it becomes more concentrated around $\mu$. The concentration $\kappa$ then reflects our confidence in this estimate, growing as optimization progresses.

\begin{lemma}[Expected first-order variance]
\label{lemma:expvar}
Let $\hat q$ be the plug-in estimator of $q$ with gradient $\bfg(\theta^*)$ and let $\theta^* \sim p_\mathrm{VM}(\cdot \mid \mu, \kappa)$. Then
\begin{equation*}
    \E_{\theta^*}\bigl[\Var(\hat q)\bigr]
    =
    \frac{\sigma^2}{N_\mathrm{shots}} \Tr\bigl(P S_g\bigr)
    +
    O(N_\mathrm{shots}^{-2}),
\end{equation*}
where $S_g = \E_{\theta^*}\bigl[\bfg(\theta^*)\bfg(\theta^*)^T\bigr]$.
\end{lemma}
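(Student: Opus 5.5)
The plan has two steps: first derive the conditional first-order variance for a fixed minimizer $\theta^*$, then average that expression over the von Mises prior. For a fixed $\theta^*$, the plug-in estimator is $\hat q = q(\hat\bfb)$, and by \cref{eq:noisy_b} we have $\hat\bfb = \bfb + A^{-1}\bfepsilon_\mathrm{shots}$. As in \cref{eq:energy_taylor}, a first-order Taylor expansion around $\bfb$ gives
\begin{equation*}
    \hat q = q + \bfg(\theta^*)^T A^{-1}\bfepsilon_\mathrm{shots} + O_p(N_\mathrm{shots}^{-1}).
\end{equation*}
The shot noise satisfies $\E[\bfepsilon_\mathrm{shots}\bfepsilon_\mathrm{shots}^T] = \frac{\sigma^2}{N_\mathrm{shots}} I$, because the three measurements are independent and $\sigma$ is assumed constant. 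The leading variance term is therefore
\begin{equation*}
    \frac{\sigma^2}{N_\mathrm{shots}}\,\bfg^T A^{-1}A^{-T}\bfg = \frac{\sigma^2}{N_\mathrm{shots}}\,\bfg^T P\bfg,
\end{equation*}
where we use $A^{-1}A^{-T} = (A^TA)^{-1} = P$, valid because $A$ is square and nonsingular for $\alpha\in(0,\pi)$. This is the same computation that yields \cref{eq:var_energy}, now with a generic gradient $\bfg$ in place of $\bfv$, so I would simply cite the argument of \cref{app:var_general}.

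For the second step, write the scalar as a trace, $\bfg^T P\bfg = \Tr(P\,\bfg\bfg^T)$. Taking $\E_{\theta^*}$ and using linearity of trace and expectation, together with the fact that $P$ depends only on $\varphi$ and $\alpha$ and not on $\theta^*$, gives
\begin{equation*}
    \E_{\theta^*}\bigl[\bfg^T P\bfg\bigr] = \Tr\bigl(P\,\E_{\theta^*}[\bfg\bfg^T]\bigr) = \Tr(P S_g).
\end{equation*}
The constant $\sigma^2/N_\mathrm{shots}$ factors out of the expectation.

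The main obstacle is controlling the remainder uniformly in $\theta^*$, so that its expectation over the prior is still $O(N_\mathrm{shots}^{-2})$. The approach is to note that $\bfg$ is smooth in $\bfb$ whenever the amplitude $r$ is bounded away from zero. For the minimizer, $\bfg$ involves $1/r$; for the energy, $\bfg$ is bounded. Under that nondegeneracy, the second-order Taylor remainder contributes terms with constants that are continuous functions of $\theta^*$ on the compact circle, hence uniformly bounded.

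The cross term between the linear term and the quadratic remainder is a third moment of Gaussian noise, so it vanishes; the first surviving correction is therefore of order $N_\mathrm{shots}^{-2}$. Integrating bounded functions against the normalized density $p_\mathrm{VM}$ preserves this order. If a fully rigorous moment bound on $\hat q$ itself is delicate, for instance because atan2 is discontinuous near $r=0$, I would state the result at the level of the delta-method expansion, as the paper does in \cref{eq:var_energy}, and interpret $O(N_\mathrm{shots}^{-2})$ accordingly.
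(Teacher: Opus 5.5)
Your proposal is correct and follows the paper's proof: you take the conditional first-order variance $\frac{\sigma^2}{N_\mathrm{shots}}\bfg^T P\bfg$ from \cref{app:var_general}, rewrite it as $\frac{\sigma^2}{N_\mathrm{shots}}\Tr(P\,\bfg\bfg^T)$, and move the expectation over $\theta^*$ inside the trace because $P$ does not depend on $\theta^*$. Your extra argument that the remainder stays uniformly $O(N_\mathrm{shots}^{-2})$ under the prior (odd Gaussian moments vanish, $r>0$ fixed, continuity on the compact circle) covers a point the paper leaves implicit, and it is sound at the delta-method level you state.
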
 The proof is given in \cref{app:expvar}.

Applying \cref{lemma:expvar} with $q = f^*$, so that $\bfg = \bfv$ of \cref{eq:gradient_v}, gives the following theorem:
\begin{theorem}[Expected variance of the minimum energy]
\label{thm:expvar_energy}
Let $\theta^* \sim p_\mathrm{VM}(\cdot \mid \mu, \kappa)$ with $\mu = \varphi$. For
observations at $\varphi, \varphi \pm \alpha$ with $\alpha \in (0,\pi)$, it holds that
\begin{equation}
    \E_{\theta^*}\bigl[\Var(\hat f^*)\bigr]
    =
    \frac{\sigma^2}{N_\mathrm{shots}} \Tr\bigl(P S_v\bigr)
    + O(N_\mathrm{shots}^{-2}).
    \label{eq:expvar_energy}
\end{equation}
\end{theorem}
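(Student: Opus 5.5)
The plan is to obtain the theorem as a direct specialization of \cref{lemma:expvar} with $q = f^*$. This requires checking that $\hat f^*$ is a plug-in estimator of the kind the lemma covers and that its gradient depends on the coefficients only through $\theta^*$.

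First, from \cref{eq:energy_opt}, $\hat f^* = \hat b_1 - \sqrt{2}\sqrt{\hat b_2^2+\hat b_3^2}$ is obtained by plugging $\hat{\bfb} = A^{-1}\bff$ into a map that is smooth in a neighbourhood of $\bfb$ whenever $r>0$. By \cref{eq:energy_taylor} and \cref{eq:gradient_v}, its gradient is $\bfv$. As shown just before the statement, $\cos\theta^* = -b_2/r$ and $\sin\theta^* = -b_3/r$, so $\bfv = \bfv(\theta^*) = (1,\sqrt{2}\cos\theta^*,\sqrt{2}\sin\theta^*)^T$. This is exactly the form $\bfg(\theta^*)$ assumed in the lemma.

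Second, I will condition on $\theta^*$ and use \cref{eq:var_energy}. Since $\bfepsilon_\mathrm{shots}$ has covariance $\frac{\sigma^2}{N_\mathrm{shots}} I$, we get $\Var(\hat f^*\mid\theta^*) = \frac{\sigma^2}{N_\mathrm{shots}}\bfv^T P\bfv + O(N_\mathrm{shots}^{-2})$, with $P=(A^TA)^{-1}$. Note that $A$ depends only on the pivot $\varphi$ and on $\alpha$, not on $\theta^*$. Writing $\bfv^TP\bfv = \Tr(P\bfv\bfv^T)$ and using linearity of trace and expectation then gives $\E_{\theta^*}[\Tr(P\bfv\bfv^T)] = \Tr(PS_v)$, where $S_v = \E_{\theta^*}[\bfv\bfv^T]$ is finite because $\bfv$ is bounded. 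Setting $\mu=\varphi$ only fixes the prior's centre; this matters for evaluating $S_v$ explicitly later, but the identity itself holds for any $\mu$.

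The main obstacle is the remainder term: I must show that the $O(N_\mathrm{shots}^{-2})$ bound survives taking the expectation over $\theta^*$. The constant in the second-order Taylor remainder depends on the Hessian of $\sqrt{b_2^2+b_3^2}$, which scales like $1/r$. So I will state the theorem conditionally on the amplitude $r$ being bounded away from zero. With $r$ fixed, the Hessian bound depends on $r$ alone and not on the angle $\theta^*$. The remainder is therefore uniform in $\theta^*$ and integrates to $O(N_\mathrm{shots}^{-2})$ against the von Mises density, which has total mass one. This is the same uniformity already invoked in the proof of \cref{lemma:expvar}, so the theorem follows immediately.
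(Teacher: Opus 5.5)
Your proposal is correct and follows essentially the paper's route: specialize \cref{lemma:expvar} with $\bfg=\bfv(\theta^*)$, rewrite $\bfv^TP\bfv=\Tr(P\bfv\bfv^T)$, and move the expectation inside the trace. The paper's proof also writes out $S_v$ from the von Mises trigonometric moments and uses $\mu=\varphi$ with a rotation argument to show $\Tr(PS_v)$ does not depend on the pivot; as you note, the stated identity does not need this, and your check that the remainder is uniform in $\theta^*$ for fixed $r>0$ is more careful than the paper, whose lemma proof carries the $O(N_\mathrm{shots}^{-2})$ term through the expectation without comment.
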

The proof is given in \cref{app:expvar_energy}. \Cref{eq:expvar_energy} depends on the data only through $\sigma^2/N_\mathrm{shots}$, the
spacing $\alpha$, and the concentration $\kappa$.

\Cref{fig:validation_energy} validates \cref{eq:expvar_energy} against the empirical expected variance obtained from 50\,000 sampled minimizers per $(\kappa,\alpha)$ pair, showing close agreement across all tested values of $\alpha$ and $\kappa$.

\begin{figure}[ht]
    \centering
    \includegraphics[width=0.9\linewidth]{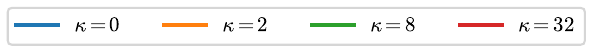}
    \includegraphics[width=0.9\linewidth]{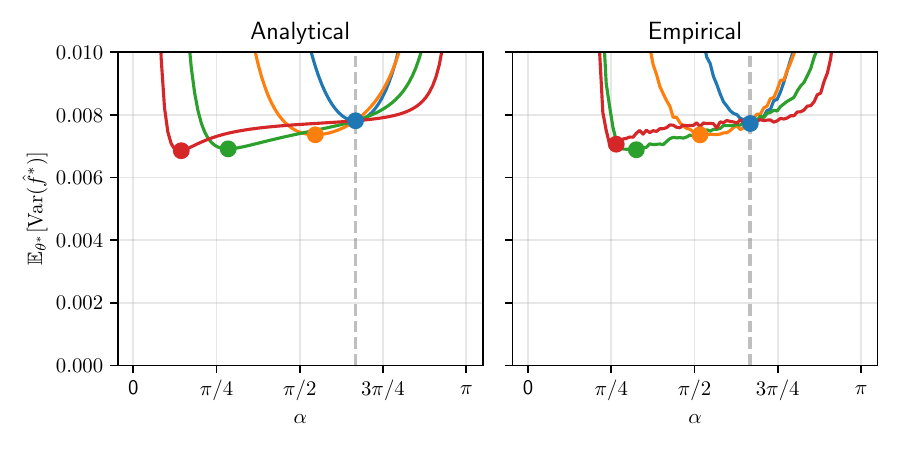}
    \caption{Theoretical (left) and empirical (right) expected variance of $\hat{f}^*$ for a single parameter update, as a function of $\alpha$ for several values of $\kappa$. The dashed line indicates $\alpha = 2\pi/3$. For each $(\kappa, \alpha)$ pair, 50\,000 minimizer $\theta^*_i \sim \mathrm{VonMises}(\mu{=}0,\, \kappa)$ are drawn and noisy observations at $(0, -\alpha, +\alpha)$ are simulated with $\mathcal{N}(0,\, \sigma^2 / N_\mathrm{shots})$ noise. Settings: $\sigma = 1$, $N_\mathrm{shots} = 128$.}
    \label{fig:validation_energy}
\end{figure}

\citet{endo_optimal_2023} showed using a uniform prior, i.e., $\kappa=0$, that equidistant measurements spaced by $2\pi/3$ minimize \cref{eq:expvar_energy}. However, for arbitrary $\kappa$, the optimal positioning of observations cannot be obtained in closed form. This limitation is less severe than it appears, because in sequential optimization for VQE we usually care more about minimizing the error of $\theta^*$ than of $f^*$ as this is the quantity we compute on each iteration in \cref{alg:rotosolve} and ultimately should yield the solution of \cref{eq:vqe_min}. In the next section we show how we can perform the same steps to minimize the expected variance of the minimizer.

\subsection{Variance of minimizer}
\label{sec:minimizer}
Similar to \cref{sec:energy} we now analyze the variance of the estimated minimizer $\hat \theta^*$. Substituting \cref{eq:noisy_b} into \cref{eq:theta_opt} gives
\begin{equation}
    \hat \theta^* 
    =
    \theta^* + \bfw^T \left(A^{-1} \bfepsilon_\mathrm{shots} \right)
    +
    O_p(N_{\mathrm{shots}}^{-1}),
    \label{eq:theta_taylor}
\end{equation}
where the gradient
\begin{equation}
    \bfw = \left(0,\frac{-b_3}{r^2},\frac{b_2}{r^2} \right)^T
    \label{eq:gradient_w}
\end{equation}
depends on the sinusoidal amplitude $r = \sqrt{b_2^2 + b_3^2}$. Compared to $\bfv$ in \cref{eq:gradient_v}, the first entry of $\bfw$ is $0$ because $\theta^*$ does not depend on $b_1$, and the roles of $b_2$ and $b_3$ are interchanged with one sign flip. The $(w_2, w_3)$ components are therefore tangential to a circle of radius $r$ rather than radial as in $\bfv$, which reflects that $f^*$ depends on the amplitude $r$ while $\theta^*$ depends on the phase. See \cref{app:gradient_w} for the full derivation.

Analogous to \cref{eq:var_energy} we express the variance of the estimated minimizer using \cref{eq:theta_taylor}. Then the same derivation as in \cref{app:var_general} provides
\begin{equation}
    \Var(\hat \theta^*)
    =
    \frac{\sigma^2}{N_\mathrm{shots}} \bfw^T P \bfw
    + 
    O\!\left(N_\mathrm{shots}^{-2}\right).
    \label{eq:var_theta}
\end{equation}
Now we can place again a von Mises prior to model our prior belief where the minimizer $\theta^*$ might be. The reparameterized gradient is $\bfw(\theta^*) = \frac{1}{r}\left(0, \sin(\theta^*), -\cos(\theta^*)\right)^T$.

Applying \cref{lemma:expvar} with $q = \theta^*$, so that $\bfg = \bfw$ of
\cref{eq:gradient_w}, gives the following theorem:
\begin{theorem}[Expected variance of the minimizer]
\label{thm:expvar_theta}
Let $\theta^* \sim p_\mathrm{VM}(\cdot \mid \mu, \kappa)$ with $\mu = \varphi$ and observations at $\varphi, \varphi \pm \alpha$ with $\alpha \in (0,\pi)$. Then, it holds that
\begin{equation}
    \E_{\theta^*}\bigl[\Var(\hat\theta^*)\bigr]
    =
    \frac{\sigma^2}{N_\mathrm{shots}} \Tr\bigl(P S_w\bigr)
    + O(N_\mathrm{shots}^{-2}).
    \label{eq:expvar_theta}
\end{equation}
\end{theorem}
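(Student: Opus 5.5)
The plan is to obtain \cref{thm:expvar_theta} as a direct instance of \cref{lemma:expvar}, with $q=\theta^*$ and $\bfg=\bfw$. The only real work is to confirm that the lemma's hypotheses hold for the minimizer, in the same way as for the energy in \cref{thm:expvar_energy}.

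\textbf{Step 1: the conditional first-order variance.} Fix $\theta^*$ and hence $\bfb$. The first-order expansion \cref{eq:theta_taylor} gives $\hat\theta^*-\theta^* = \bfw^T A^{-1}\bfepsilon_\mathrm{shots} + O_p(N_\mathrm{shots}^{-1})$. Its leading term is linear in the Gaussian noise, whose covariance is $\frac{\sigma^2}{N_\mathrm{shots}} I$. This yields \cref{eq:var_theta}:
\[
\Var(\hat\theta^*)=\frac{\sigma^2}{N_\mathrm{shots}}\,\bfw^T A^{-1}A^{-T}\bfw + O(N_\mathrm{shots}^{-2}).
\]
Since $A$ is square and nonsingular for $\alpha\in(0,\pi)$, we have $A^{-1}A^{-T}=(A^TA)^{-1}=P$. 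The remainder is $O(N_\mathrm{shots}^{-2})$ because the cross term between the linear part and the quadratic Taylor term involves an odd Gaussian moment and vanishes. The square of the quadratic term is $O(N_\mathrm{shots}^{-2})$.

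\textbf{Step 2: writing $\bfw$ in terms of the minimizer.} Using $\cos\theta^*=-b_2/r$ and $\sin\theta^*=-b_3/r$, we get $\bfw(\theta^*)=\frac1r(0,\sin\theta^*,-\cos\theta^*)^T$. For fixed amplitude $r$ this is a smooth, bounded function of $\theta^*$, so the leading term is $\frac{\sigma^2}{N_\mathrm{shots}}\bfw(\theta^*)^TP\,\bfw(\theta^*)$.

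\textbf{Step 3: averaging over the prior.} Take expectation over $\theta^*\sim p_\mathrm{VM}(\cdot\mid\varphi,\kappa)$ and use the cyclic property of the trace:
\[
\E\bigl[\bfw^TP\bfw\bigr]=\Tr\bigl(P\,\E[\bfw\bfw^T]\bigr)=\Tr(PS_w).
\]
This is exactly the content of \cref{lemma:expvar}. One point matters here. $P$ depends only on $\varphi$ and $\alpha$, not on $\theta^*$, because the pivot is fixed at $\mu=\varphi$ before measuring. Hence $P$ can be pulled out of the expectation.

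\textbf{Main obstacle: uniformity of the remainder.} The expected hard step is showing that the $O(N_\mathrm{shots}^{-2})$ remainder holds uniformly over $\theta^*$, so that it survives the expectation. Near $r\to0$ the map $\bfb\mapsto\operatorname{atan2}$ is singular. Also, $\hat\theta^*$ is circular, so its variance must be read as the variance of the wrapped deviation.

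I would handle this by taking $r>0$ fixed and the same for every $\theta^*$. Then the Hessian of $\operatorname{atan2}$ is bounded on a neighbourhood of the circle of radius $r$, with bounds independent of $\theta^*$ by rotational symmetry. The Gaussian tails outside that neighbourhood give contributions that are exponentially small in $N_\mathrm{shots}$. The remainder constants are therefore uniform in $\theta^*$, and integrating against the normalized von Mises density preserves the order. This matches the convention implicitly used in \cref{thm:expvar_energy}, and gives \cref{eq:expvar_theta}.
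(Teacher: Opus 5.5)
Your proposal is correct and follows the paper's proof. The paper likewise applies \cref{lemma:expvar} with $\bfg=\bfw$ and pulls the $\theta^*$-independent $P$ out of the expectation via the trace; beyond what the statement requires, it also writes $S_w$ explicitly from the von Mises moments and notes that $\Tr(PS_w)$ is pivot-invariant. Your uniformity argument for the remainder is more careful than the paper, which just asserts $O(N_\mathrm{shots}^{-2})$, but getting $N_\mathrm{shots}^{-2}$ rather than $N_\mathrm{shots}^{-3/2}$ needs bounded third derivatives of $\operatorname{atan2}$ near radius $r$ (not only the Hessian) so the odd-moment cancellation applies, which smoothness away from the origin supplies.
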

\Cref{eq:expvar_theta} depends only on $\sigma^2/N_\mathrm{shots}$, the spacing $\alpha$, the amplitude $r$ and the concentration $\kappa$.
The proof is given in \cref{app:expvar_theta}.

In contrast to \cref{eq:expvar_energy}, the minimizer of \cref{eq:expvar_theta} can be obtained in a closed form.

\begin{theorem}[Optimal shift angle]
\label{thm:optimal_shift}
For $\theta^* \sim p_\mathrm{VM}(\cdot \mid \mu, \kappa)$, the shift minimizing the expected first-order variance of the minimizer \cref{eq:expvar_theta} is
\begin{equation}
    \alpha^*
    =
    \argmin_\alpha \E_{\theta^*}\bigl[\Var(\hat\theta^*)\bigr]
    = 2\arctan\sqrt{\frac{1+\eta}{2}},
    \label{eq:optimal_shift}
\end{equation}
where $\eta = \sqrt{\frac{25-23 R_2(\kappa)}{1+R_2(\kappa)}}$ and $R_2(\kappa) := I_2(\kappa)/I_0(\kappa)$ denotes the ratio of modified Bessel functions of the first kind. 
\end{theorem}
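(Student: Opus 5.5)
The plan is to make $\Tr(PS_w)$ from \cref{thm:expvar_theta} fully explicit as a function of $\alpha$ and $\kappa$, and then minimize it over $\alpha\in(0,\pi)$ by elementary calculus. The first step is a reduction to $\varphi=\mu=0$. Shifting the pivot multiplies the design matrix \cref{eq:design_matrix} on the right by the block rotation $\mathrm{diag}(1,R_\varphi)$. Hence $P$ is conjugated by this orthogonal matrix. Shifting $\theta^*$ by $\mu=\varphi$ rotates $\bfw(\theta^*)$, and therefore $S_w$, by the same matrix. The trace is invariant under this conjugation, so we may set $\varphi=\mu=0$.

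\emph{Computing $S_w$.} Using $\bfw(\theta^*)=\frac1r(0,\sin\theta^*,-\cos\theta^*)^T$ and the von Mises moments $\E[\cos 2\theta^*]=R_2(\kappa)$ and $\E[\sin 2\theta^*]=0$ (by symmetry about $\mu=0$), we get
\[
S_w=\frac{1}{2r^2}\,\mathrm{diag}\bigl(0,\;1-R_2,\;1+R_2\bigr).
\]
Only the diagonal entries $P_{22}$ and $P_{33}$ of $P$ are therefore needed.

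\emph{Computing $P$.} Write $c=\cos\alpha$. Because the rows of $A$ sit at $0,\pm\alpha$, the matrix $A^TA$ decouples. Its upper-left $2\times2$ block is $\begin{bmatrix}3&\sqrt2(1+2c)\\ \sqrt2(1+2c)&2(1+2c^2)\end{bmatrix}$, and its third diagonal entry is $4\sin^2\alpha$. The determinant of the $2\times2$ block simplifies to $4(1-c)^2$. This gives
\[
P_{22}=\frac{3}{4(1-c)^2},\qquad P_{33}=\frac{1}{4\sin^2\alpha}.
\]
Hence
\[
\Tr(PS_w)=\frac{1}{8r^2}\Bigl[\frac{3(1-R_2)}{(1-\cos\alpha)^2}+\frac{1+R_2}{\sin^2\alpha}\Bigr].
\]

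\emph{Minimization.} This is the step requiring the most care. I would substitute $u=\tan^2(\alpha/2)\in(0,\infty)$, which is a monotone reparametrization of $\alpha\in(0,\pi)$. With $1-\cos\alpha=2u/(1+u)$ and $\sin^2\alpha=4u/(1+u)^2$, the objective becomes, up to the positive factor $1/(32r^2)$,
\[
h(u)=\frac{(1+u)^2\,(3a+bu)}{u^2},\qquad a=1-R_2,\quad b=1+R_2 .
\]
Both $a$ and $b$ are positive for finite $\kappa$, since $R_2<1$. Setting the logarithmic derivative $\frac{2}{1+u}+\frac{b}{3a+bu}-\frac{2}{u}$ to zero and clearing denominators gives the quadratic $bu^2-bu-6a=0$. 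Its unique positive root is
\[
u=\tfrac12\Bigl(1+\sqrt{1+24a/b}\Bigr).
\]
One checks that $1+24a/b=\frac{25-23R_2}{1+R_2}=\eta^2$, so $u=(1+\eta)/2$.

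Since $h\to\infty$ both as $u\to0^+$ and as $u\to\infty$, this unique critical point is the global minimizer. Inverting $u=\tan^2(\alpha/2)$ then yields $\alpha^*=2\arctan\sqrt{(1+\eta)/2}$. The main obstacle is purely the bookkeeping in computing $P$ and in the $u$-substitution. The structural steps, namely the rotation invariance and the diagonal form of $S_w$, are straightforward.
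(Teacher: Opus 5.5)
Your proposal is correct and follows essentially the same route as the paper: rotation invariance to reduce to $\varphi=\mu=0$, the diagonal form of $S_w$, the block-diagonal Gram matrix giving $P_{22}=3/(4(1-\cos\alpha)^2)$ and $P_{33}=1/(4\sin^2\alpha)$, and a half-angle substitution that yields the same quadratic $(1+R_2)u^2-(1+R_2)u-6(1-R_2)=0$. The only minor difference is in the final step: you substitute $u=\tan^2(\alpha/2)$ into the objective itself and certify global minimality by $h\to\infty$ at both ends plus a unique critical point, whereas the paper differentiates in $\alpha$ first and then invokes strict convexity of both terms on $(0,\pi)$; both arguments are valid.
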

The proof is given in \cref{app:alpha_closedform}.
Thus, we can efficiently adapt the spacing between observations to minimize the error in $\hat \theta^*$. Note that \cref{eq:optimal_shift} depends only on $\kappa$, which quantifies how tightly our belief about $\theta^*$ concentrates around the current pivot $\varphi$.

By comparing both \Cref{fig:validation_energy,fig:validation_theta} it can be seen that for $\kappa=0$ both objectives yield equidistant shifts $\alpha^* = 2\pi/3$ between measurements. On the other hand, when $\kappa \to \infty$, $\alpha^* \to 0$ if $\Var(\hat{f}^*)$ is minimized, while $\alpha \to \frac{\pi}{2}$ if $\Var(\hat{\theta}^*)$ is minimized. The optimal shifts are actually confined:

\begin{corollary}[Bound on the optimal shift angle]
\label{cor:bound}
For every $\kappa \ge 0$, the optimal shift angle \cref{eq:optimal_shift} is bounded as
\begin{equation*}
    \frac{\pi}{2} \;\le\; \alpha^* \;\le\; \frac{2\pi}{3}.
\end{equation*}
\end{corollary}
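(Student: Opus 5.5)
The plan is to treat $\alpha^*$ in \cref{eq:optimal_shift} as a composition of monotone maps: $\kappa \mapsto R_2(\kappa) \mapsto \eta \mapsto \alpha^*$. The bound then follows from determining the range of $R_2$ on $\kappa\ge 0$ and pushing that interval through the remaining two maps.

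The first step is to establish $0 \le R_2(\kappa) < 1$ for all $\kappa \ge 0$. At $\kappa=0$ we have $I_2(0)=0$ and $I_0(0)=1$, so $R_2(0)=0$. For the upper bound I will use the integral representation
\begin{equation*}
I_n(\kappa)=\frac{1}{\pi}\int_0^\pi e^{\kappa\cos t}\cos(nt)\,dt.
\end{equation*}
This gives $R_2(\kappa)=\E_{\mathrm{VM}}[\cos 2\theta]$ under the von Mises law with mean $0$. Hence $|R_2|\le 1$, with strict inequality because the distribution is non-degenerate for finite $\kappa$.

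For nonnegativity I will write $I_2 = I_0 - \tfrac{2}{\kappa}I_1$ and invoke the series representation $I_n(\kappa)=\sum_m (\kappa/2)^{2m+n}/(m!(m+n)!)$. Every term is nonnegative, so $I_2\ge0$ directly. Monotonicity of $R_2$ is not actually needed; only its range matters. The asymptotics $R_2(\kappa)\to 1$ as $\kappa\to\infty$ show that the interval is $[0,1)$.

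Next I push this interval through $\eta(R)=\sqrt{(25-23R)/(1+R)}$. The numerator is decreasing and the denominator is increasing on $[0,1)$, so $\eta$ is strictly decreasing in $R$. It follows that $\eta\in(\eta(1),\eta(0)] = (1,5]$, since $\eta(1)=\sqrt{2/2}=1$ and $\eta(0)=5$.

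Then $\alpha^* = 2\arctan\sqrt{(1+\eta)/2}$ is increasing in $\eta$, because it is a composition of increasing maps. At $\eta=1$ it gives $2\arctan 1=\pi/2$. At $\eta=5$ it gives $2\arctan\sqrt3 = 2\pi/3$. Combining these yields $\pi/2 < \alpha^* \le 2\pi/3$, which contains the claimed closed bounds; the lower value is approached only in the limit $\kappa\to\infty$.

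The only step I expect to require care is $R_2\in[0,1)$, and specifically the upper bound. The representation $R_2=\E[\cos2\theta]$ handles this cleanly, so the obstacle is essentially just making sure the Bessel-ratio range is justified rather than assumed. Everything else is elementary monotonicity together with evaluation of the endpoints.
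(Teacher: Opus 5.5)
Your proof is correct and follows the same route as the paper: establish $R_2(\kappa)\in[0,1)$, show that $\eta$ decreases strictly from $5$ toward $1$, and use that $\alpha^*$ is strictly increasing in $\eta$ with $\alpha^*(1)=\pi/2$ and $\alpha^*(5)=2\pi/3$. Your integral- and series-based justification of $0\le R_2<1$ is more complete than the paper's one-line appeal to $I_2<I_0$, which alone does not give nonnegativity, and you also correctly obtain the slightly sharper strict lower bound $\pi/2<\alpha^*\le 2\pi/3$.
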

The proof is given in \cref{app:bound}.

\begin{figure}[ht]
    \centering
    \includegraphics[width=0.9\linewidth]{figures/validation/legend_kappas.pdf}
    \includegraphics[width=0.9\linewidth]{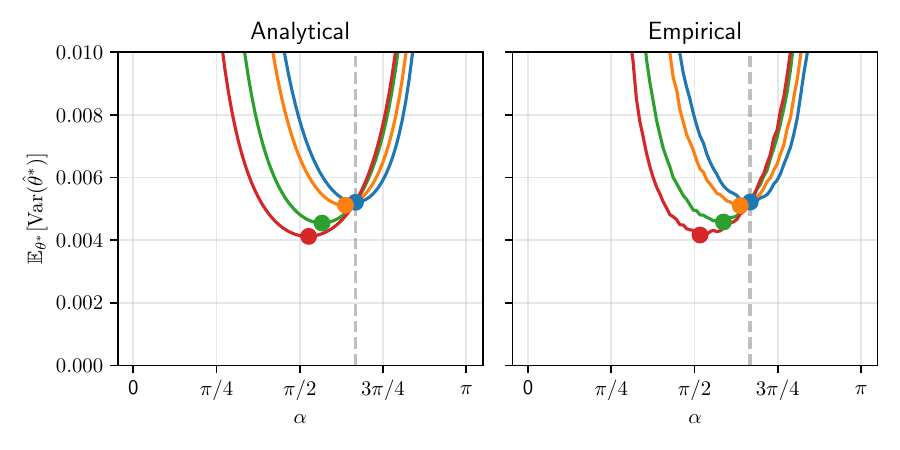}
    \caption{Theoretical (left) and empirical (right) expected variance of $\hat\theta^*$ as a function of $\alpha$ for several values of $\kappa$. The dashed line indicates $\alpha = 2\pi/3$. For each $(\kappa, \alpha)$ pair, 50\,000 minimizer $\theta^*_i \sim \mathrm{VonMises}(\mu{=}0,\, \kappa)$ are drawn and noisy observations at $(0, -\alpha, +\alpha)$ are simulated with $\mathcal{N}(0,\, \sigma^2 / N_\mathrm{shots})$ noise. Angular errors are wrapped to $[-\pi, \pi]$. Settings: $r = 1$, $\sigma = 1$, $N_\mathrm{shots} = 128$.}
    \label{fig:validation_theta}
\end{figure}

To understand why $\frac{\pi}{2}$ acts as a lower bound, we examine the high-confidence limit $\kappa \to \infty$ in which the pivot $\varphi$ converges to $\theta^*$. In this regime $\bfw$ in \cref{eq:gradient_w} aligns with the $b_3$-axis, so only the sine component of the model informs $\hat\theta^*$, estimated through the antisymmetric difference of the observations at $\varphi, \varphi \pm \alpha$ only. As shown in \cref{app:pivot_aligned}, this yields the following asymptotic form:

\begin{theorem}[Asymptotic minimizer variance]
\label{thm:asymptotic}
In the limit $\varphi \to \theta^*$, the first-order variance of the estimated minimizer reduces to
\begin{equation*}
    \Var(\hat\theta^*) = \frac{\sigma^2}{N_\mathrm{shots}\, 4 r^2 \sin^2(\alpha)} + O(N_\mathrm{shots}^{-2}),
\end{equation*}
which is uniquely minimized over $\alpha \in (0, \pi)$ at $\alpha = \frac{\pi}{2}$.
\end{theorem}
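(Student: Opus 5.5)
We would start from the first-order expression \cref{eq:var_theta}, $\Var(\hat\theta^*) = \frac{\sigma^2}{N_\mathrm{shots}}\bfw^T P\bfw + O(N_\mathrm{shots}^{-2})$ with $P=(A^TA)^{-1}$. Since $A$ is square and nonsingular for $\alpha\in(0,\pi)$, we have $P = A^{-1}A^{-T}$, so $\bfw^TP\bfw = \|A^{-T}\bfw\|^2$. The plan is therefore to compute the single vector $\bfu := A^{-T}\bfw$, i.e.\ solve $A^T\bfu=\bfw$, in the aligned case $\varphi=\theta^*$. Then $\hat\theta^*-\theta^* \approx \bfu^T\bfepsilon_\mathrm{shots}$ is an explicit linear combination of the three observation errors.

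First we substitute $\varphi=\theta^*$ into the reparameterized gradient $\bfw(\theta^*) = \frac1r(0,\sin\theta^*,-\cos\theta^*)^T$. By rotational invariance it is convenient to change coordinates so that $\varphi=0$. Concretely, we rotate the $(b_2,b_3)$ plane by $\varphi$, an orthogonal change of basis that leaves $P$'s action on $\bfw$ (the quadratic form) unchanged. With $\varphi=\theta^*=0$, $A$ has rows $(1,\sqrt2,0)$, $(1,\sqrt2\cos\alpha,\sqrt2\sin\alpha)$, $(1,\sqrt2\cos\alpha,-\sqrt2\sin\alpha)$, and $\bfw=\frac1r(0,0,-1)^T$, aligned with the $b_3$-axis as stated in the text.

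Next we solve $A^T\bfu=\bfw$ by symmetry. The third column of $A$ is antisymmetric in the last two observations, while the first two columns are symmetric. So we guess $\bfu=(0,c,-c)^T$. Then the first two equations, namely $\sum_i u_i=0$ and $\sqrt2(\cos\alpha)(c-c)=0$, hold automatically. The third equation gives $\sqrt2\sin\alpha\cdot 2c=-1/r$, so $c=-1/(2\sqrt2\, r\sin\alpha)$. Hence $\|\bfu\|^2=2c^2=\frac{1}{4r^2\sin^2\alpha}$, which gives the stated formula. In words, $\hat\theta^*$ depends only on the antisymmetric difference $f_2-f_3$, so the measurement at the pivot is irrelevant to first order. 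Uniqueness of the solution follows from invertibility of $A$.

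Finally, the minimization is immediate: $\sin^2\alpha\le1$ on $(0,\pi)$ with equality only at $\alpha=\pi/2$, so $\alpha=\pi/2$ is the unique minimizer. The only step requiring care is justifying the rotation to $\varphi=0$. An alternative that avoids it is to directly verify that $\bfu=(0,c,-c)$ solves $A^T\bfu=\bfw$ for general $\varphi=\theta^*$. The check uses $\cos(\varphi+\alpha)-\cos(\varphi-\alpha)=-2\sin\varphi\sin\alpha$ and $\sin(\varphi+\alpha)-\sin(\varphi-\alpha)=2\cos\varphi\sin\alpha$, which reproduce $\bfw\propto(0,\sin\varphi,-\cos\varphi)$ with the same constant $c$. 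This direct verification is the version I would write, since it sidesteps any invariance argument.
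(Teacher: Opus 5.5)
Your proposal is correct and is essentially the paper's proof: the paper also sets $\varphi=\theta^*=0$ so that $\bfw=(0,0,-1/r)^T$, and relies on the same symmetric/antisymmetric decoupling, except that it reads $(A^TA)^{-1}_{33}=1/(4\sin^2\alpha)$ off the block-diagonal Gram matrix, while you solve $A^T\mathbf{u}=\bfw$ with $\mathbf{u}=(0,c,-c)^T$ and take $\|\mathbf{u}\|^2$. Your direct general-$\varphi$ check via the sum-to-product identities, which avoids the rotation argument, matches the remark the paper makes right after its proof about the antisymmetric difference $f(\varphi+\alpha)-f(\varphi-\alpha)$, and you also state the $\sin^2\alpha\le 1$ minimization step that the paper's proof leaves implicit.
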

The resulting optimal shift $\alpha = \frac{\pi}{2}$ coincides with the optimal shift \cite{pedrielli_bayesian_2025} of the parameter-shift rule, which estimates the gradient by extracting the antisymmetric component.

The $\frac{\pi}{2}$ bound is not tied to the von Mises prior. Once the pivot sits at $\theta^*$, the two symmetric observations agree, and only a residual offset makes them differ. Their difference responds most strongly to such an offset where the sinusoid is steepest, a quarter period away from the pivot, while the noise on it stays independent of $\alpha$.
By contrast, $\hat f^*$ depends on $b_1$ and the amplitude $r$ only through $b_1 - \sqrt{2}\,r$, which all three observations inform once they concentrate around $\theta^*$, where the fit is evaluated. No contrast between them is needed, which explains why $\Var(\hat f^*)$ continues to favor $\alpha \to 0$ as $\kappa$ grows.

\section{\label{sec:method}Prior-informed Adaptive Shifts (PAS)}
We now introduce \textit{Prior-informed Adaptive Shifts} (PAS), which uses the history of past pivot estimates to set the measurement spacing $\alpha$ adaptively via \cref{eq:optimal_shift}. The only required quantity is the concentration $\kappa$ of the von Mises prior, which encodes our confidence that the current pivot $\varphi$ is close to $\theta^*_d$.

We estimate $\kappa$ from the history of previous pivot estimates. Let $\mathcal{H}_d =  \{\hat\theta^*_{d,\tau}\}_{\tau=t-W}^{t-1}$ denote the estimates obtained for parameter $d$ 
over the preceding $W$ iterations. The maximum-likelihood estimator $\hat\kappa$ under a von 
Mises model satisfies
\begin{equation}
    \frac{I_1(\hat\kappa)}{I_0(\hat\kappa)} = \bar{R}, \qquad
    \bar{R} = \frac{1}{W}\left|\sum_{\tau=t-W}^{t-1} e^{i\hat\theta^*_{d,\tau}}\right|,
    \label{eq:kappa_mle}
\end{equation}
where $\bar{R} \in [0,1]$ is the mean resultant length of the circular sample and $I_0, I_1$ are modified Bessel functions of zeroth and first order. \Cref{eq:kappa_mle} has no closed form but $I_1(\kappa)/I_0(\kappa)$ is monotonically increasing in $\kappa$, so $\hat\kappa$ is obtained by standard nonlinear root-finding. As optimization converges and prior belief concentrates, $\bar R \to 1$ and $\hat\kappa \to \infty$, so $R_2(\hat\kappa) \to 1$ and \cref{eq:optimal_shift} drives the measurement spacing toward $\pi/2$.

Two natural ways exist for how $\kappa$ is estimated.
\begin{itemize}
    \item A gate-specific estimate uses only the history of parameter $d$ itself, capturing that some parameters are pinned down early while others remain dispersed throughout the optimization.
    \item A \textit{global} estimate pools histories across all $D$ parameters, which reduces variance at the cost of individual resolution and is particularly useful in early iterations when little per-parameter history has accumulated.
\end{itemize}
The full procedure is given in \cref{alg:pas}.

\begin{figure*}
\begin{minipage}{\linewidth}
\begin{algorithm}[H]
\small
\caption{PAS-Rotosolve}\label{alg:pas}
\begin{algorithmic}
\Require Initial parameters $\bftheta \in [-\pi,\pi)^D$, number of measurements $N_\mathrm{shots}$, window size $W$, estimation mode $m \in \{\textsc{global},\, \textsc{gate}\}$
\State Initialize buffers $\mathcal{H}_d \gets [\hspace{0.5em}]$ of capacity $W$ for all $d$
\Repeat
    \For{$d = 1, \dots, D$}
        \If{$m = \textsc{global}$ \textbf{or} $|\mathcal{H}_d| < 2$}
            \State Estimate $\hat\kappa$ from $\bigcup_{d'=1}^{D} \mathcal{H}_{d'}$, or $\hat\kappa \gets 0$ if all buffers empty \Comment{Global}
        \Else
            \State Estimate $\hat\kappa$ from $\mathcal{H}_d$ \Comment{Gate-specific}
        \EndIf
        \State $\hat\alpha \gets \argmin_\alpha \E_{\theta^*}\bigl[\Var(\hat\theta_d^*)\bigr]$ using $\hat\kappa$
        \State Set pivot $\varphi \gets \theta_d$
        \State Measure $\bff \gets f(\bfPhi_d)$ with $N_\mathrm{shots}$ shots, shift $\hat\alpha$
        \State $\hat{\bfb}_d \gets A^{-1}(\varphi, \hat\alpha)\, \bff$
        \State $\hat\theta^*_d \gets \arctantwo(\hat{b}_{3,d},\, \hat{b}_{2,d}) + \pi$
        \State Update $\theta_d \gets \hat\theta^*_d$ and append $\hat\theta^*_d$ to $\mathcal{H}_d$
    \EndFor
\Until{convergence or termination criterion is met, e.g.\ an exhausted shot budget}
\end{algorithmic}
\textit{$\hat\kappa$-estimation via \cref{eq:kappa_mle}; $\hat\alpha$ via \cref{eq:optimal_shift}; $\hat{\bfb}_d$ via \cref{eq:noisy_b}; $\hat\theta_d^*$ via \cref{eq:theta_opt}.}

\end{algorithm}
\end{minipage}
\end{figure*}

\section{\label{sec:experiments}Experiments}
\subsection{Experimental Setup}
\label{sec:experimental_setup}

\begin{figure*}[t]
    \centering
    \includegraphics[width=0.9\linewidth]{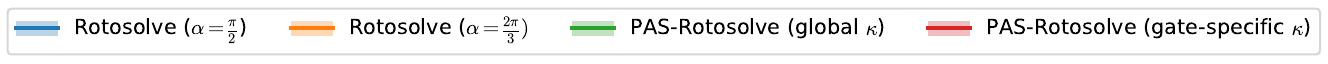}\\[3pt]
    \begin{subfigure}{\linewidth}
        \centering
        \includegraphics[width=\linewidth]{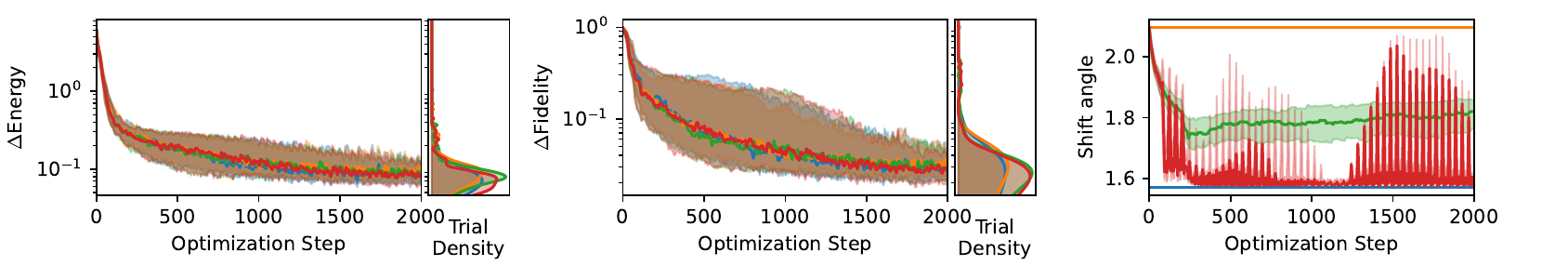}
        \caption{$N_\mathrm{shots}=1000$, 2000 steps}
        \label{fig:tfim_high}
    \end{subfigure}\\[3pt]
    \begin{subfigure}{\linewidth}
        \centering
        \includegraphics[width=\linewidth]{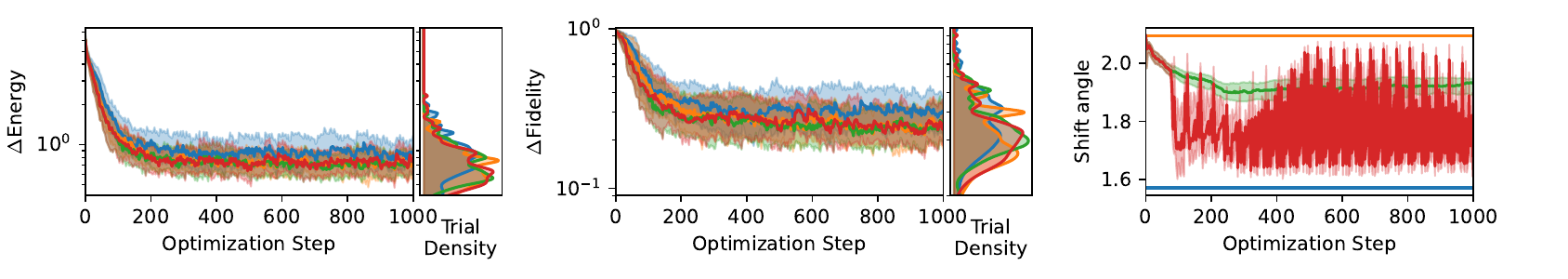}
        \caption{$N_\mathrm{shots}=100$, 1000 steps}
        \label{fig:tfim_low}
    \end{subfigure}
    \caption{VQE on the TFIM Hamiltonian, $L=3$-layer $Q=5$-qubit ESU2 circuit.
      Rows: shot budget; columns within a row: energy (left), fidelity (middle), shift angle $\alpha$ (right).
      \methoddesc{Rotosolve} \trialdesc}
    \label{fig:tfim}
\end{figure*}

\begin{figure*}[t]
    \centering
    \includegraphics[width=0.9\linewidth]{figures/experiments/rotosolve_legend.pdf}\\[3pt]
    \begin{subfigure}{\linewidth}\centering
        \includegraphics[width=\linewidth]{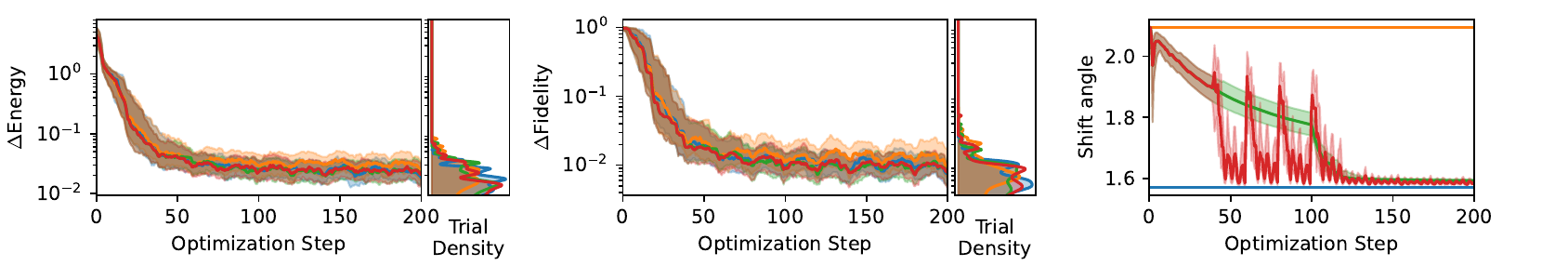}
        \caption{$N_\mathrm{shots}=200$, 200 steps}
        \label{fig:maxcut_high}
    \end{subfigure}\\[3pt]
    \begin{subfigure}{\linewidth}\centering
        \includegraphics[width=\linewidth]{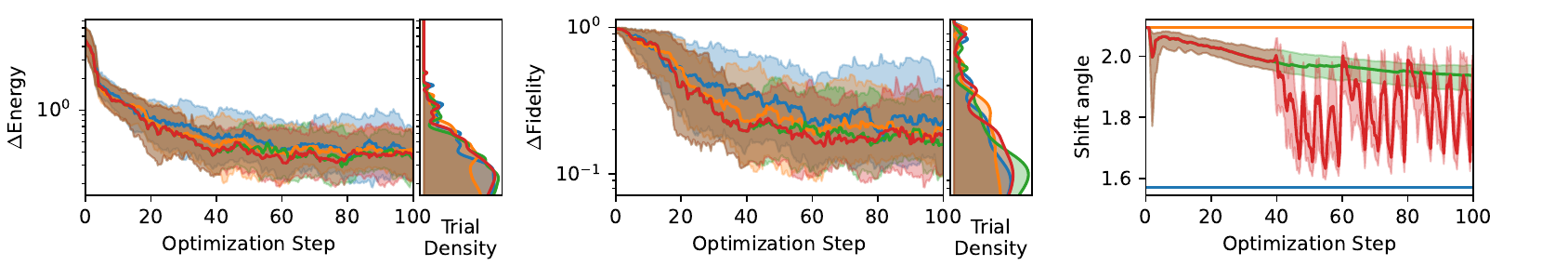}
        \caption{$N_\mathrm{shots}=20$, 100 steps}
        \label{fig:maxcut_low}
    \end{subfigure}
    \caption{VQE on the MaxCut Hamiltonian, $L=5$-layer $Q=4$-qubit HEA circuit.
      Same layout as \cref{fig:tfim}. \methoddesc{Rotosolve} \trialdesc}
    \label{fig:maxcut}
\end{figure*}

We evaluate PAS-Rotosolve against fixed-shift Rotosolve on two problem instances that test complementary aspects of our method. For all experiments, we prepare $100$ different random initial points sampled from the uniform distribution $\mathcal{U}[-\pi,\pi)^D$ that are used as a starting point for all optimization methods. Our Python implementation uses Qiskit~\cite{javadi-abhari_quantum_2024} for the classical simulation of quantum hardware. The implementation for reproducing our results is attached as supplemental material.

\paragraph{Hamiltonians and Quantum Circuits.}
We consider two target Hamiltonians. The first is the transverse-field Ising model (TFIM) at the ferromagnetic critical point,
\begin{equation}
    H_\mathrm{TFIM} = -J\sum_{j=1}^{Q-1} Z_j Z_{j+1} - h \sum_{j=1}^{Q} X_j,
    \label{eq:tfim}
\end{equation}
with $J = 1$, $h = 1$ and open boundary conditions on $Q = 5$ qubits. The TFIM is a widely used benchmark for assessing VQE performance due to its practical relevance and the availability of an analytically computable ground state for small system sizes. We pair this Hamiltonian with an Efficient SU(2) ansatz containing $L=3$ layers, so the total number of variational parameters is $D = 2Q(L + 1) = 40$ (see \cref{fig:circuit_esu2} for further details).
 
The second is a MaxCut Hamiltonian on an undirected graph $G = (V, E)$,
\begin{equation}
    H_{\mathrm{MaxCut}} = \sum_{i=1}^{Q}\sum_{j=i+1}^{Q} \frac{w_{ij}}{2}\left(Z_i Z_j - I\right) + \lambda\left(I - Z_1\right),
    \label{eq:maxcut_hamiltonian}
\end{equation}
where $w_{ij} = 1$ if $(i,j) \in E$ and $0$ otherwise, and any $\lambda>0$ breaks the tie
between a partition and its complement, without changing the optimal cut. We consider the graph with $V = \{1,2,3,4\}$ and $E = \{(1,2),(1,3),(1,4),(2,3),(3,4)\}$ with $\lambda=3$, following the setup of~\cite{lai_interpolationbased_2026}. This amounts to a problem of $Q = 4$ qubits. We pair this Hamiltonian with a $L = 5$ layer hardware-efficient ansatz (HEA). The total number of variational parameters is $D = QL = 20$ (see \cref{fig:circuit_hea} for further details).

\paragraph{Evaluation Metrics.}
We compare all methods using the energy gap
\begin{equation}
    \Delta\text{Energy} = \EH(\hat{\bftheta}) - \Tr \left[H \rho_\mathrm{GS} \right]\,,
    \label{eq:delta_energy}
\end{equation}
and the infidelity
\begin{equation}
    \Delta\text{Fidelity} = 1 - \Tr\left[\rho_\mathrm{GS}\, \rho(\hat{\bftheta})\right]\,,
    \label{eq:delta_fidelity}
\end{equation}
both plotted on a log scale (smaller is better) as the main metrics. Here, $\rho_\mathrm{GS}$ and $\Tr \left[H \rho_\mathrm{GS} \right]$ are the ground-state density operator and ground-state energy, respectively, both computed by exact diagonalization of the target Hamiltonian~$H$. As a measure of quantum computational cost, we report the number of optimization steps.

\paragraph{Baseline Methods.}
We compare PAS-Rotosolve to standard Rotosolve~\cite{ostaszewski_structure_2021} with fixed shifts $\alpha=\frac{2\pi}{3}$ and $\alpha=\frac{\pi}{2}$. As mentioned in~\cref{sec:prior_work}, both settings are common in the literature and are also the boundaries of $\alpha$ when using PAS-Rotosolve. As discussed in~\cref{sec:seq_min}, each Rotosolve step requires observations at three distinct positions along the current parameter direction. Unlike NFT~\cite{nakanishi_sequential_2020}, which reuses a previously measured energy value at the current optimum, we perform fresh measurements at all three positions in every step for all methods. The only difference in all methods is therefore the difference in the shifts $\alpha$.

\paragraph{Algorithm Settings.}
We measure each observation with a fixed $N_\mathrm{shots}$, hence for a single step in the optimization procedure we measure $3\times N_\mathrm{shots}$. We use the window size $W = 5$ to estimate $\kappa$. In \textit{gate-specific} mode, estimating $\hat\kappa$ from $\mathcal{H}_d$ requires at least two pivot estimates, so the first sweep over all parameters falls back to a \textit{global} estimate before switching. Rather than fixing $\hat\kappa = 0$ during this warm-up, we use the global estimate, which pools pivots across all $D$ parameters and is therefore available from the first sweep onward, providing a data-driven $\hat\kappa$ 
even before any per-parameter buffer is populated.

\subsection{Results}
\label{sec:results}

\paragraph{Transverse-field Ising model.}
When measuring with a relatively large shot count per iteration (\cref{fig:tfim_high}) the two fixed-shift Rotosolve variants ($\alpha=\frac{\pi}{2}$ and $\alpha=\frac{2\pi}{3}$) perform almost identically in both energy and fidelity, so the choice of shift carries little signal for an adaptive scheme to exploit. PAS matches the best fixed shift without tuning. With gate-specific $\kappa$, most gates are resolved with high certainty (large $\kappa$) and settle near $\alpha=\frac{\pi}{2}$, while a minority with poorly estimated pivots oscillate, yielding small $\kappa$ and shifts pulled toward the equidistant $\frac{2\pi}{3}$. The single global-$\kappa$ variant averages over this heterogeneity and settles at an intermediate $\alpha\approx1.8$.

With a smaller shot count per iteration (\cref{fig:tfim_low}) the methods separate. Fixed-shift Rotosolve at $\alpha=\frac{\pi}{2}$ is clearly worse in both energy and fidelity, whereas equidistant Rotosolve ($\alpha=\frac{2\pi}{3}$) and both PAS variants do better. This matches our theory: the more noisy setup inflates the uncertainty on the pivot, lowering $\kappa$ and favouring more equidistant shifts. PAS recovers this regime adaptively, raising $\alpha$ toward $\frac{2\pi}{3}$ (PAS (global $\kappa$) rises to $\alpha\approx1.9$) relative to \cref{fig:tfim_high}.

\paragraph{MaxCut.}
The same pattern holds on MaxCut. At the larger budget (\cref{fig:maxcut_high}) Rotosolve $(\alpha=\frac{\pi}{2})$ performs best among the fixed shifts, with Rotosolve $(\alpha=\frac{2\pi}{3})$ trailing slightly, and both PAS variants match the best fixed shift without it being specified a priori. The shift angle panel makes the mechanism visible: global $\kappa$ drives $\alpha$ smoothly down toward $\frac{\pi}{2}$, and gate-specific $\kappa$ does the same after an initial oscillatory phase, collapsing onto $\alpha\approx\frac{\pi}{2}$ once the pivots are well resolved (large $\kappa$).

At the smaller budget (\cref{fig:maxcut_low}) the ordering reverses. Equidistant Rotosolve $(\alpha=\frac{2\pi}{3})$ now performs better among the fixed shifts, as expected from the same pivot-uncertainty argument: the larger shot noise lowers $\kappa$ and favours more equidistant shifts. The PAS variants track this adaptively, with global $\kappa$ settling near $\alpha\approx1.9$. The gate-specific $\kappa$ shifted upward relative to \cref{fig:maxcut_high} but does not converge to the same shift angle for all parameters, most likely due to the larger energy variance. In fidelity their medians settle slightly below both fixed shifts, suggesting PAS can improve on either fixed choice rather than only matching it.

Two ablations confirm the bound at the full-optimization level. With linearly interpolated schedules (\cref{app:shift_linear}), runs terminating at $\alpha=\frac{\pi}{2}$ remain stable while smaller targets degrade or collapse. With fixed shifts (\cref{app:shift_fixed}), every $\alpha<\frac{\pi}{2}$ drastically degrades performance, increasingly so for smaller $\alpha$.

\section{\label{sec:conclusion}Conclusion}
The choice of shift angle in sequential minimal optimization methods such as Rotosolve and the Nakanishi--Fujii--Todo (NFT) algorithms has so far been treated as a fixed design decision, with equidistant measurements as the standard recommendation \cite{anders_adaptive_2024,endo_optimal_2023,lai_interpolationbased_2026}. Our analysis reveals this recommendation to be a special case. Placing a von Mises prior on the minimizer, equidistant shifts are optimal precisely when the prior is uniform, i.e., at the beginning of the optimization procedure. Once knowledge about the pivot accumulates, the optimal spacing tightens around it.

Central to this result is a distinction between two objectives. What matters in sequential optimization is the variance of the estimated \emph{minimizer}, not of the estimated optimal \emph{energy}, since the minimizer is the quantity updated at every iteration which influences the energy measurements. As the confidence grows the two diverge. Optimizing the minimum energy variance drives the spacing to zero, whereas for the variance of the minimizer this is bounded below by $\frac{\pi}{2}$. Consequently, the optimal shift is always confined to $\left[\frac{\pi}{2}, \frac{2\pi}{3}\right]$ and admits a closed form depending only on the prior concentration. Our first-order variance approximations matched empirical values across concentrations. We proposed Prior-informed Adaptive Shifts (PAS) as a practical way of using this observation. By inferring the concentration from the history of past pivot estimates, either pooled globally or per parameter, PAS sets the shift at each step without manual tuning. On TFIM and MaxCut, across high and low shot budgets, PAS recovered whichever fixed shift ($\frac{2\pi}{3}$ or $\frac{\pi}{2}$) performed best. The optimal shift angle is therefore not a property of the algorithm but of the state and does not need to be chosen by hand.

\begin{acknowledgments}
This work was funded by the German Ministry for Education and Research (BMBF) under the grant BIFOLD25B
and by the European Union’s HORIZON MSCA Doctoral Networks program project AQTIVATE (101072344).
This work was also supported by the European Union’s Horizon Europe Framework Programme (HORIZON) under the ERA Chair scheme with grant agreement no. 101087126,
and by the Ministry of Science, Research and Culture of the State of Brandenburg within the Center for Quantum Technology and Applications (CQTA). 
\begin{center}
    \includegraphics[width = 0.3\textwidth]{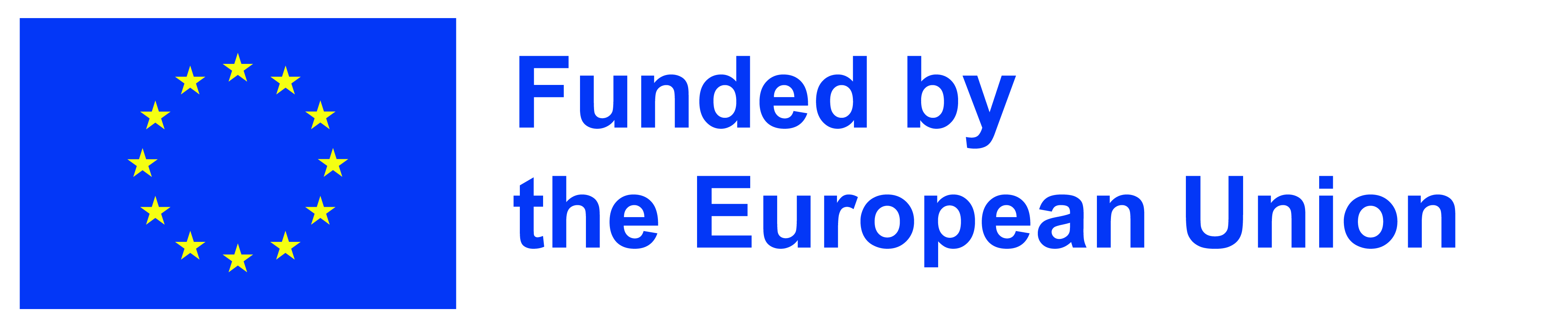}
    \includegraphics[width = 0.08\textwidth]{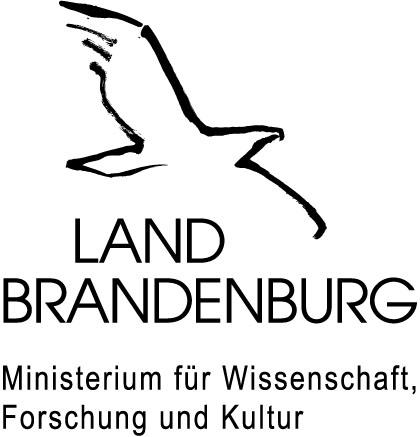}
\end{center}
\end{acknowledgments}

\appendix
\crefalias{section}{appendix}
\crefalias{subsection}{appendix}

\section{Noise propagation gradient for optimal estimates}
\subsection{\label{app:gradient_v} Estimated minimum energy}
The minimum energy along dimension $d$ is
\begin{equation*}
    f^*(\bfb') = b_1' - \sqrt{2}\sqrt{{b_2'}^{2} + {b_3'}^{2}}.
\end{equation*}
To determine how coefficient noise propagates to this estimate, we evaluate the gradient at the true coefficients $\bfb$:
\begin{align*}
    \bfv &= \nabla f^*(\bfb')\big|_{\bfb'=\bfb} \\
    &= \left. \left(1,\; \frac{-\sqrt{2}{b_2'}}{\sqrt{{b_2'}^2 + {b_3'}^2}},\; \frac{-\sqrt{2}{b_3'}}{\sqrt{{b_2'}^2 + {b_3'}^2}} \right)^T\right|_{\bfb'=\bfb} \\
    &= \left(1,\; \frac{-\sqrt{2}b_2}{r},\; \frac{-\sqrt{2}b_3}{r} \right)^T,
\end{align*}
where $r = \sqrt{b_2^2 + b_3^2}$ is the sinusoidal amplitude.
A first-order Taylor expansion of \cref{eq:energy_opt} combined with \cref{eq:noisy_b} then gives
\begin{align*}
    \hat f^* &\approx f^* + \bfv^T \left(\hat\bfb - \bfb \right) \\
    &= f^* + \bfv^T \left(\bfb + A^{-1}\bfepsilon_\mathrm{shots} - \bfb \right) \\
    &= f^* + \bfv^T A^{-1}\bfepsilon_\mathrm{shots},
\end{align*}
which is \cref{eq:energy_taylor}.

\subsection{\label{app:gradient_w} Estimated minimizer}
The same approach applies to the energy minimizer
\begin{equation*}
    \theta^*(\bfb') = \arctantwo(b_3',b_2') + \pi.
\end{equation*}
Here the gradient takes a different form because $\theta^*$ is insensitive to $b_1$ and depends on the ratio of $b_3$ and $b_2$ rather than their magnitude:
\begin{align*}
    \bfw &= \nabla \theta^*(\bfb')\big|_{\bfb'=\bfb} \\
    &= \left. \left(0,\; \frac{-b_3'}{{b_2'}^2 + {b_3'}^2},\; \frac{b_2'}{{b_2'}^2 + {b_3'}^2} \right)^T\right|_{\bfb'=\bfb} \\
    &= \left(0,\; \frac{-b_3}{r^2},\; \frac{b_2}{r^2} \right)^T.
\end{align*}
Expanding \cref{eq:theta_opt} to first order as before yields \cref{eq:theta_taylor}:
\begin{align*}
    \hat \theta^* &\approx \theta^* + \bfw^T \left(\hat\bfb - \bfb \right) \\
    &= \theta^* + \bfw^T \left(\bfb + A^{-1}\bfepsilon_\mathrm{shots} - \bfb \right) \\
    &= \theta^* + \bfw^T A^{-1}\bfepsilon_\mathrm{shots}.
\end{align*}

\section{Variance of estimated quantity}
\label{app:var_general}
Let $\hat q \approx q + \bfg(\theta^*)^T(A^{-1} \bfepsilon_\mathrm{shots})$
be a first-order Taylor approximation of some estimated quantity $\hat q$
around its true value $q$, where $\bfg(\theta^*)$ is the corresponding
gradient. Then
\begin{align}
    \mathrm{Var}(\hat{q}) &= \E\bigl[(\hat{q} - q)^2\bigr] \nonumber \\
    &\approx \E\bigl[(\bfg(\theta^*)^T A^{-1} \bfepsilon_\mathrm{shots})^2\bigr] \nonumber \\
    &= \E\bigl[\bfg(\theta^*)^T A^{-1} \bfepsilon_\mathrm{shots} \bfepsilon_\mathrm{shots}^T \left(A^{-1}\right)^T \bfg(\theta^*)\bigr] \nonumber \\
    &= \bfg(\theta^*)^T A^{-1} \E\bigl[\bfepsilon_\mathrm{shots} \bfepsilon_\mathrm{shots}^T\bigr] \left(A^{-1}\right)^T \bfg(\theta^*) \nonumber \\
    &= \frac{\sigma^2}{N_\mathrm{shots}}\, \bfg(\theta^*)^T A^{-1} \left(A^{-1}\right)^T \bfg(\theta^*) \nonumber \\
    &= \frac{\sigma^2}{N_\mathrm{shots}}\, \bfg(\theta^*)^T P\, \bfg(\theta^*),
    \label{eq:var_general}
\end{align}
where $P := \left(A^TA\right)^{-1} = A^{-1}\left(A^{-1}\right)^T$. Setting
$\bfg = \bfv,\, q = f^*$ yields \cref{eq:var_energy} and
$\bfg = \bfw,\, q = \theta^*$ yields \cref{eq:var_theta}.

\section{Expected value of variance using von Mises prior}
\label{app:expvar}
Throughout this appendix we use the Bessel-function ratios
\begin{align*}
    R(\kappa) &:= \frac{I_1(\kappa)}{I_0(\kappa)}, &
    R_2(\kappa) &:= \frac{I_2(\kappa)}{I_0(\kappa)},
\end{align*}
which arise as the trigonometric moments $\E[\cos(t(\theta^*-\mu))] = I_t(\kappa)/I_0(\kappa)$ of the von Mises distribution. We will further use the recurrence $I_{t-1}(\kappa) - I_{t+1}(\kappa) = \frac{2t}{\kappa}I_t(\kappa)$.
We can then use the characteristic function $\E_{\theta^*}\left[e^{it\theta^*} \right] =\frac{I_t(\kappa)}{I_0(\kappa)} e^{it\mu}$ to get the identities
\begin{align*}
    \E_{\theta^*}\left[\cos(\theta^*)\right] &= R(\kappa)\cos(\mu), \\
    \E_{\theta^*}\left[\sin(\theta^*)\right] &= R(\kappa)\sin(\mu), \\
    \E_{\theta^*}\left[\cos^2(\theta^*)\right] &= \frac{1}{2}(1 + R_2(\kappa)\cos(2\mu)), \\
    \E_{\theta^*}\left[\sin^2(\theta^*)\right] &= \frac{1}{2}(1 - R_2(\kappa)\cos(2\mu)), \\
    \E_{\theta^*}\left[\cos(\theta^*) \sin\theta^*\right] &= \frac{1}{2}R_2(\kappa)\sin(2\mu).
\end{align*}
\begin{proof}[Proof of \Cref{lemma:expvar}]
    Taking the expectation of \cref{eq:var_general} over $\theta^*$ and
    using
    $\bfg(\theta^*)^T P\, \bfg(\theta^*) = \Tr\bigl(P\, \bfg(\theta^*)\bfg(\theta^*)^T\bigr)$
    together with linearity of the trace gives
    \begin{align*}
        \E_{\theta^*}\bigl[\Var(\hat q)\bigr]
        &= \frac{\sigma^2}{N_\mathrm{shots}}
          \Tr\bigl(P\, \E_{\theta^*}\bigl[\bfg(\theta^*)\bfg(\theta^*)^T\bigr]\bigr) \\
        &= \frac{\sigma^2}{N_\mathrm{shots}} \Tr\bigl(P S_g\bigr).
    \end{align*}
\end{proof}

\subsection{\label{app:expvar_energy}Estimated minimum energy}
\begin{proof}[Proof of \Cref{thm:expvar_energy}]
The expected variance of the estimated minimum energy is then given as 
\begin{align*}
\E_{\theta^*}[\Var(\hat f^*)] &= \frac{\sigma^2}{N_\mathrm{shots}} \E_{\theta^*}\left[\bfv(\theta^*)^T P \bfv(\theta^*)\right] \\
&= \frac{\sigma^2}{N_\mathrm{shots}} \Tr\left(P \cdot S_v\right)
\end{align*}
where
\begin{align*}
    S_v &= \E_{\theta^*}[\bfv(\theta^*) \bfv(\theta^*)^T] \\
    &= \begin{bmatrix} 1 & \sqrt{2}\,R(\kappa)\cos\mu & \sqrt{2}\,R(\kappa)\sin\mu \\[2pt]
\sqrt{2}\,R(\kappa)\cos\mu & 1 + R_2(\kappa)\cos2\mu & R_2(\kappa)\sin2\mu \\[2pt]
\sqrt{2}\,R(\kappa)\sin\mu & R_2(\kappa)\sin2\mu & 1 - R_2(\kappa)\cos2\mu
\end{bmatrix}.
\end{align*}
For $\mu = \varphi$ the objective does not depend on $\varphi$. By angle addition, shifting the pivot rotates the $(b_2, b_3)$ columns of $A$ and the corresponding components of $\bfv(\theta^*)$ by the same angle. This rotation is orthogonal and conjugates both $P$ and $S_v$, so $\Tr\left(P S_v\right)$ is unchanged and depends only on the offset $\alpha$ and the concentration $\kappa$.
\end{proof}

For $\kappa = 0$ the prior is uniform over the circle, so $S_v = I$ and the optimal spacing is
\begin{equation*}
    \argmin_\alpha \Tr\left(P \cdot S_v\right)\Big|_{\kappa=0} = \frac{2\pi}{3},
\end{equation*}
recovering the equal-spacing result of \cite{endo_optimal_2023}.

\subsection{\label{app:expvar_theta}Estimated minimizer}
\begin{proof}[Proof of \Cref{thm:expvar_theta}]
We can use the same identities from above to find the expected variance of the estimated minimizer
\begin{align*}
\E_{\theta^*}[\Var(\hat \theta^*)] &= \frac{\sigma^2}{N_\mathrm{shots}} \E_{\theta^*}\left[\bfw(\theta^*)^T P \bfw(\theta^*)\right] \\
&= \frac{\sigma^2}{N_\mathrm{shots}} \Tr\left(P \cdot S_w\right)
\end{align*}
where
\begin{align*}
    S_w &= \E_{\theta^*}[\bfw(\theta^*) \bfw(\theta^*)^T] \\
    &= \frac{1}{2r^2}
    \begin{bmatrix}
        0 & 0 & 0 \\
        0 & 1 - R_2(\kappa)\cos(2\mu) & -R_2(\kappa)\sin(2\mu) \\
        0 & -R_2(\kappa)\sin(2\mu) & 1 + R_2(\kappa)\cos(2\mu)
    \end{bmatrix}.
\end{align*}
The $(b_2, b_3)$ components of $\bfw(\theta^*)$ rotate with the pivot in the same way, so the argument of \cref{app:expvar_energy}, that $\Tr\left(P S_w\right)$ does not depend on the pivot, applies again.
\end{proof}
As for \cref{app:expvar_energy}, for $\kappa = 0$ the prior is uniform over the circle, so $S_w = \frac{1}{2r^2}\diag\left(0,1,1\right)$ and the optimal spacing is
\begin{equation*}
    \argmin_\alpha \Tr \left(P \cdot S_w \right)\Big|_{\kappa=0} = \frac{2\pi}{3}.
\end{equation*}
\section{\label{app:alpha_closedform}Closed-form solution for optimal shift between observations}
\begin{proof}[Proof of \Cref{thm:optimal_shift}]
We seek the shift $\alpha$ between observations that minimizes the expected variance of the estimated minimizer,
\begin{equation*}
    \alpha^* = \argmin_\alpha \E_{\theta^*}\left[ \Var \left(\hat{\theta}^* \right)\right].
\end{equation*}
As stated in \cref{sec:vqe}, we assume $\frac{\sigma^2}{N_\mathrm{shots}}$ does not depend on $\alpha$ due to $\sigma$ being constant. Therefore we want to optimize
\begin{equation*}
    \alpha^* = \argmin_\alpha \Tr \left(P \cdot S_w\right).
\end{equation*}
From \cref{app:expvar_theta}, setting $\mu = 0$ since we always rescale around the pivot $\varphi = 0$, we can simplify $S_w$ and obtain
\begin{equation*}
    S_w = \frac{1}{2r^2}
    \begin{bmatrix}
        0 & 0 & 0 \\
        0 & 1 - R_2(\kappa) & 0 \\
        0 & 0 & 1 + R_2(\kappa)
    \end{bmatrix}.
\end{equation*}
Given $\bfPhi = (0,\alpha,-\alpha)^T$, the design matrix is
\begin{equation*}
    A = \begin{bmatrix}
        1 & \sqrt{2} & 0 \\
        1 & \sqrt{2}\cos\alpha & \sqrt{2}\sin\alpha \\
        1 & \sqrt{2}\cos\alpha & -\sqrt{2}\sin\alpha
    \end{bmatrix},
\end{equation*}
where we used $\cos(-\alpha) = \cos\alpha$ and $\sin(-\alpha) = -\sin\alpha$. The Gram matrix is
\begin{equation*}
    A^TA = \begin{bmatrix}
        3 & \sqrt{2}(1 + 2\cos\alpha) & 0 \\
        \sqrt{2}(1 + 2\cos\alpha) & 2 + 4\cos^2\alpha & 0 \\
        0 & 0 & 4\sin^2\alpha
    \end{bmatrix}.
\end{equation*}
Since $A^TA$ is block-diagonal with a $2\times 2$ block $B$ and a $1\times 1$ block $C = 4\sin^2\alpha$, we invert each block separately. The determinant of $B$ is
\begin{equation*}
    d = 3(2 + 4\cos^2\alpha) - 2(1 + 2\cos\alpha)^2 = 4(\cos\alpha-1)^2,
\end{equation*}
which gives
\begin{equation*}
    (A^TA)^{-1} = \begin{bmatrix}
        \frac{2(1+2\cos^2\alpha)}{d} & -\frac{\sqrt{2}(1 + 2\cos\alpha)}{d} & 0 \\[6pt]
        -\frac{\sqrt{2}(1 + 2\cos\alpha)}{d} & \frac{3}{d} & 0 \\[6pt]
        0 & 0 & \frac{1}{4\sin^2\alpha}
    \end{bmatrix}.
\end{equation*}
The full expression therefore reduces to
\begin{equation*}
    \Tr\left((A^TA)^{-1} \cdot S_w\right) 
    = \frac{1}{2r^2}\left(\frac{3(1 - R_2(\kappa))}{d} + \frac{1 + R_2(\kappa)}{4\sin^2\alpha} \right).
\end{equation*}
Dropping the constant factor $\frac{1}{2r^2}$, we define
\begin{equation*}
    h(\alpha) = \frac{3(1 - R_2(\kappa))}{4(\cos\alpha-1)^2} + \frac{1 + R_2(\kappa)}{4\sin^2\alpha}.
\end{equation*}
Setting $h'(\alpha) = 0$ yields
\begin{equation*}
    \frac{3(1 - R_2(\kappa))\sin\alpha}{2(1-\cos\alpha)^3} = -\frac{(1 + R_2(\kappa))\cos\alpha}{2\sin^3\alpha}.
\end{equation*}
We substitute $t = \tan(\alpha/2)$ using the identities
\begin{equation*}
    \sin\alpha = \frac{2t}{1+t^2}, \quad
    \cos\alpha = \frac{1-t^2}{1+t^2}, \quad
    1 - \cos\alpha = \frac{2t^2}{1+t^2}.
\end{equation*}
Inserting these into the stationarity condition and cancelling the common factor $(1+t^2)^2$ gives
\begin{equation*}
    \frac{3(1-R_2(\kappa))}{8t^5} = -\frac{(1+R_2(\kappa))(1-t^2)}{16t^3}.
\end{equation*}
Multiplying both sides by $16t^5$ and rearranging yields the quartic in $t$:
\begin{equation*}
    (1+R_2(\kappa))\,t^4 - (1 + R_2(\kappa))\,t^2 - 6(1 - R_2(\kappa)) = 0.
\end{equation*}
Setting $u=t^2$, the quadratic formula gives
\begin{equation*}
    u = \frac{1 \pm \eta}{2},
    \qquad
    \eta = \sqrt{\frac{25 - 23 R_2(\kappa)}{1 + R_2(\kappa)}}.
\end{equation*}
Since $R_2(\kappa) \in [0,1)$, we have $\eta \ge 1$, so the negative branch gives $u \le 0$ and is rejected ($u = t^2 \ge 0$). The positive branch yields
\begin{equation*}
    \alpha^* = 2\arctan\sqrt{\frac{1 + \eta}{2}}.
\end{equation*}
Both terms of $h$ are convex on $(0,\pi)$, with second derivatives
\begin{align*}
    \frac{d^2}{d\alpha^2}\frac{1}{(1-\cos\alpha)^2} &= \frac{2(3 + 2\cos\alpha)}{(1-\cos\alpha)^3}, \\
    \frac{d^2}{d\alpha^2}\frac{1}{\sin^2\alpha} &= \frac{2 + 4\cos^2\alpha}{\sin^4\alpha},
\end{align*}
both positive since $\cos\alpha > -1$ on $(0,\pi)$. As the coefficients $\frac{3}{4}(1 - R_2(\kappa))$ and $\frac{1}{4}(1 + R_2(\kappa))$ are nonnegative, $h$ is strictly convex and the stationary point above is its unique global minimizer.
\end{proof}
\section{\label{app:bound}Bound on the optimal shift angle}
\begin{proof}[Proof of \Cref{cor:bound}]
    Since $I_2(\kappa) < I_0(\kappa)$ for all $\kappa \ge 0$, we have $R_2(\kappa) \in [0,1)$. Writing $\eta = \sqrt{g(R_2)}$ with $g(R_2) = \frac{25-23R_2}{1+R_2}$, we have
    \begin{equation*}
        g'(R_2) = \frac{-23(1+R_2) - (25-23R_2)}{(1+R_2)^2} = \frac{-48}{(1+R_2)^2} < 0,
    \end{equation*}
    so $g$ is strictly decreasing on $[0,1)$ and, since $g > 0$ there, so is $\eta = \sqrt{g}$, which decreases strictly from $5$ toward $1$. Conversely, $\alpha^*(\eta) = 2\arctan\sqrt{(1+\eta)/2}$ is a composition of strictly increasing maps on $\eta \ge 1$ and is therefore strictly increasing. Hence \cref{eq:optimal_shift} maps $R_2 = 0$ to $\alpha^* = 2\pi/3$ and $R_2 \to 1$ to $\alpha^* \to \pi/2$, and strict monotonicity of both maps ensures the bound holds not only at these extremes but for every $\kappa$.
\end{proof}
\section{Derivation of the asymptotic minimizer variance}
\label{app:pivot_aligned}
\begin{proof}[Proof of \Cref{thm:asymptotic}]
We seek the variance of $\hat\theta^*$ in the limit $\varphi \to \theta^*$. Setting $\varphi = \theta^* = 0$ without loss of generality, \cref{eq:theta_opt} gives $(b_2, b_3) = (-r, 0)$, and \cref{eq:gradient_w} reduces to $\bfw = (0, 0, -1/r)^T$. Substituting into \cref{eq:var_theta} yields
\begin{equation*}
    \Var(\hat\theta^*) = \frac{\sigma^2}{r^2 N_\mathrm{shots}} (A^TA)^{-1}_{33} + O(N_\mathrm{shots}^{-2}).
\end{equation*}
With $\bfPhi = (0, \alpha, -\alpha)^T$ in \cref{eq:design_matrix} and using $\cos(-\alpha) = \cos\alpha$ and $\sin(-\alpha) = -\sin\alpha$, the Gram matrix is
\begin{equation*}
    A^TA = \begin{bmatrix}
        3 & \sqrt{2}(1 + 2\cos\alpha) & 0 \\
        \sqrt{2}(1 + 2\cos\alpha) & 2 + 4\cos^2\alpha & 0 \\
        0 & 0 & 4\sin^2\alpha
    \end{bmatrix}.
\end{equation*}
Since $A^TA$ is block-diagonal, the $(3,3)$ entry of its inverse is obtained from the $1\times 1$ block alone,
\begin{equation*}
    (A^TA)^{-1}_{33} = \frac{1}{4\sin^2\alpha},
\end{equation*}
which yields the closed form stated in \cref{thm:asymptotic}.
\end{proof}

For a general pivot $\varphi$, the antisymmetric difference of the two symmetric observations is obtained by applying the sum-to-product identities to \cref{eq:sinusoidal},
\begin{equation*}
    f(\varphi + \alpha) - f(\varphi - \alpha) = 2\sqrt{2}\sin(\alpha) \bigl[b_3 \cos(\varphi) - b_2 \sin(\varphi)\bigr].
\end{equation*}
Setting $\varphi = \theta^* = 0$ gives $\cos(\varphi) = 1$, $\sin(\varphi) = 0$ and $(b_2, b_3) = (-r, 0)$, so the bracketed term reduces to $b_3$, recovering the aligned-case expression. This confirms that the asymptotic form derived above is the relevant limit for iterative sequential optimization, where $\varphi$ is updated toward $\theta^*$ at every step.
\section{Experimental setup}

\subsection{Circuit ansätze}
\label{app:circuits}

The variational ansätze used in our experiments are shown in Figures~\ref{fig:circuit_esu2}--\ref{fig:circuit_hea}.

\begin{figure*}[t]
    \centering
    \includegraphics[width=\textwidth]{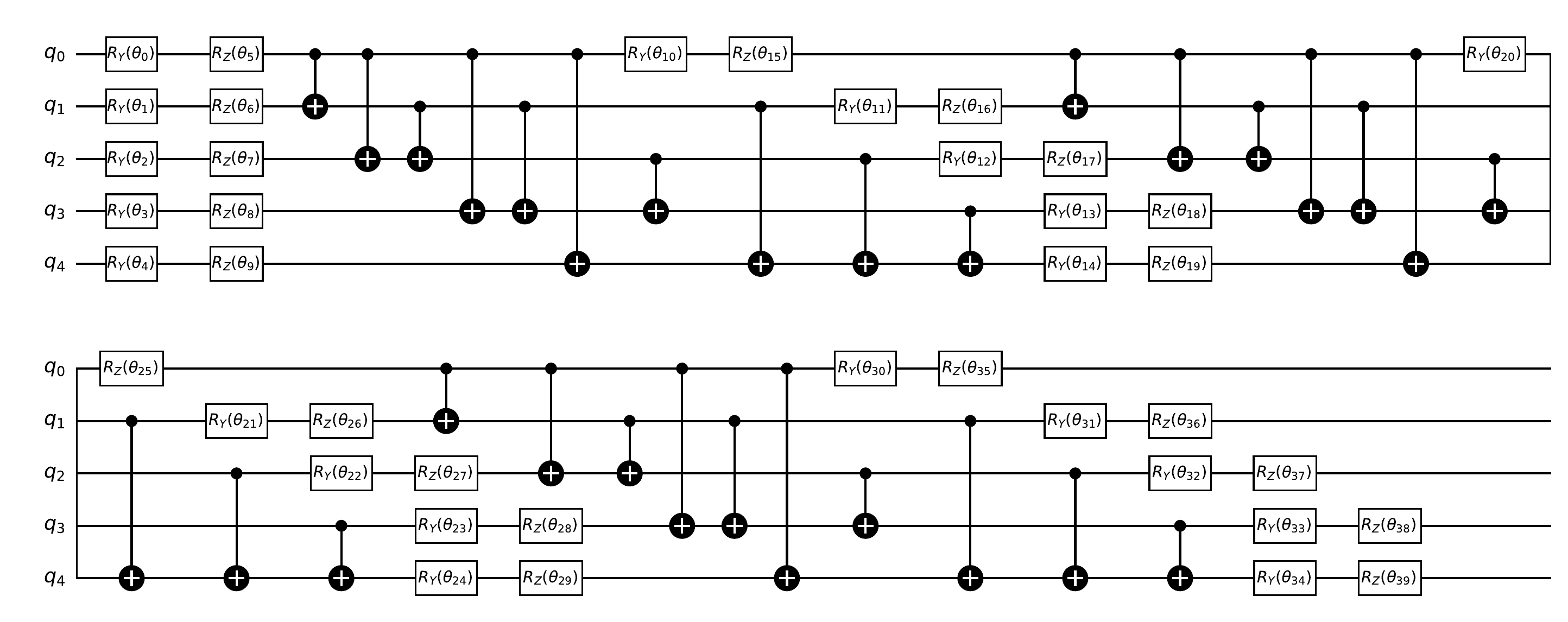}
    \caption{The efficient SU(2) ansatz with $Q=5$ qubits and $L=3$ layers.}
    \label{fig:circuit_esu2}
\end{figure*}

\begin{figure*}[t]
    \centering
    \includegraphics[width=\textwidth]{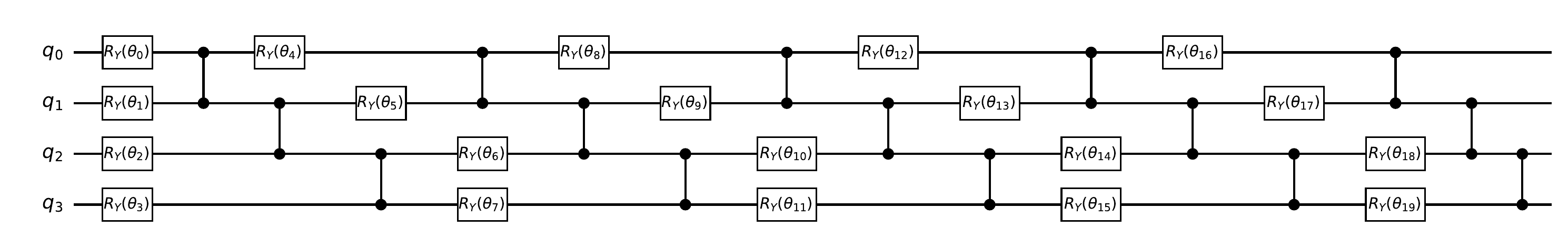}
    \caption{The hardware-efficient ansatz (HEA) with $Q=4$ qubits and $L=5$ layers.}
    \label{fig:circuit_hea}
\end{figure*}

\subsection{Linear interpolation of shift angles}
\label{app:shift_linear}

\begin{figure*}[t]
    \centering
    \includegraphics[width=0.9\linewidth]{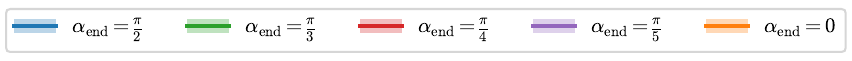}
    \includegraphics[width=\linewidth]{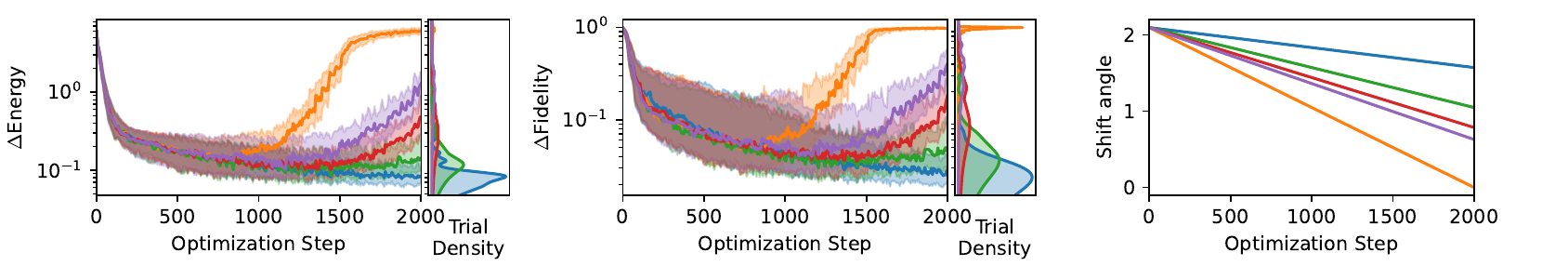}
    \caption{Rotosolve VQE for the TFIM on a $Q=5$-qubit, $L=3$-layer ESU2 ansatz; $1000$ shots per evaluation, $100$ trials of $2000$ steps. Each colour fixes a terminal shift angle $\alpha_\mathrm{end}$, with the measurement spacing linearly contracted from the equidistant $\alpha_0=\frac{2\pi}{3}$ to $\alpha_\mathrm{end}$ as given by \cref{eq:linear_schedule} (bottom panel). \trialdesc{} Too small $\alpha_\mathrm{end}$ destabilizes optimization, with $\alpha_\mathrm{end}=0$ collapsing entirely.}
    \label{fig:shift_schedule}
\end{figure*}

To test how far the measurement spacing can be contracted, we linearly interpolate $\alpha$ from the equidistant value $\alpha_0 = 2\pi/3$ at step $0$ to a target $\alpha_\mathrm{end}$ at the final optimization step $T$,

\begin{equation}
  \alpha(t) = \Bigl(1-\frac{t}{T}\Bigr)\frac{2\pi}{3} + \frac{t}{T}\,\alpha_\mathrm{end},
  \quad t = 0,\dots,T ,
  \label{eq:linear_schedule}
\end{equation}
and sweep $\alpha_\mathrm{end} \in \{\pi/2,\,\pi/3,\,\pi/4,\,\pi/5,\,0\}$ (Fig.~\ref{fig:shift_schedule}, bottom).

All schedules track each other while $\alpha$ stays near $\alpha_0$, then separate in order of $\alpha_\mathrm{end}$. Consistent with \cref{thm:asymptotic}, contracting below $\frac{\pi}{2}$ inflates the per-step minimizer variance. Only $\alpha_\mathrm{end}=\frac{\pi}{2}$ remains stable through the full run, while every smaller target eventually degrades, with outright collapse at $\alpha_\mathrm{end}=0$. This confirms the bound at the full-optimization level, the spacing should not be contracted below $\frac{\pi}{2}$ when the per-iteration objective is the minimizer $\hat\theta^*$.

\subsection{Fixed shift angles}
\label{app:shift_fixed}
\begin{figure*}[t]
    \centering
    \includegraphics[width=0.9\linewidth]{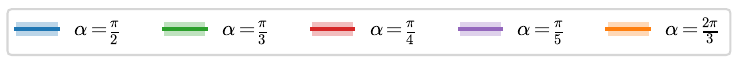}
    \includegraphics[width=\linewidth]{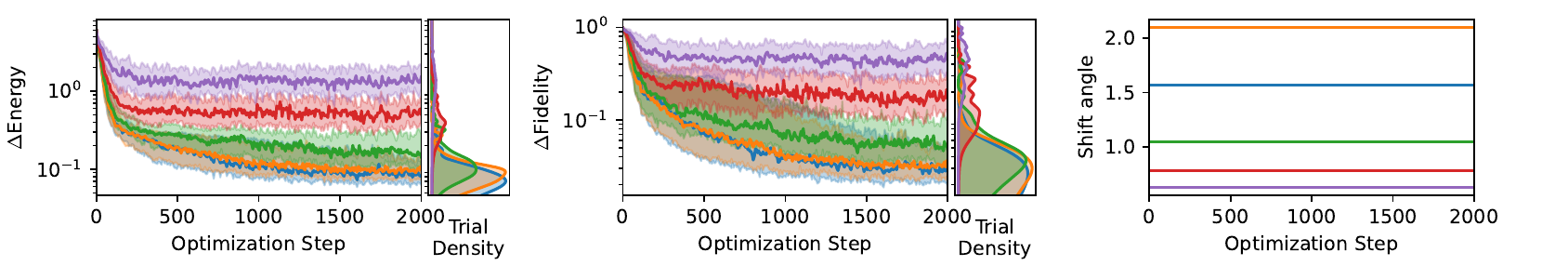}
    \caption{Rotosolve VQE for the TFIM on a $Q=5$-qubit, $L=3$-layer ESU2 ansatz; $1000$ shots per evaluation, $100$ trials of $2000$ steps. Each colour fixes a constant shift angle $\alpha$ throughout optimization, ranging from the equidistant $\alpha=\frac{2\pi}{3}$ down to $\alpha=\frac{\pi}{5}$. \trialdesc}
    \label{fig:shift_fixed}
\end{figure*}

Second, we evaluate fixed shifts below $\frac{\pi}{2}$, running default Rotosolve with constant $\alpha \in \{\frac{2\pi}{3},\,\frac{\pi}{2},\,\frac{\pi}{3},\,\frac{\pi}{4},\,\frac{\pi}{5}\}$. As shown in \Cref{fig:shift_fixed}, the two shifts within the predicted optimal range, $\alpha=\frac{2\pi}{3}$ and $\alpha=\frac{\pi}{2}$, perform comparably throughout. Consistent with \cref{thm:asymptotic}, every fixed shift below $\frac{\pi}{2}$ drastically degrades performance, in order of contraction. Unlike the interpolated schedules, the variance inflation acts from step $0$, so the shifts separate immediately. This supports keeping $\alpha \in [\frac{\pi}{2}, \frac{2\pi}{3}]$ throughout as there is no benefit to contracting below $\frac{\pi}{2}$ at any stage.

\clearpage
\bibliography{references}

@article{acharya_quantum_2025,
  title = {Quantum Error Correction below the Surface Code Threshold},
  author = {{Google Quantum AI and Collaborators}},
  year = 2025,
  month = feb,
  journal = {Nature},
  volume = {638},
  number = {8052},
  pages = {920--926},
  publisher = {Nature Publishing Group},
  issn = {1476-4687},
  doi = {10.1038/s41586-024-08449-y},
  urldate = {2026-06-10},
  copyright = {2024 The Author(s)},
  langid = {english}
}

@inproceedings{anders_adaptive_2024,
  title = {Adaptive {{Observation Cost Control}} for {{Variational Quantum Eigensolvers}}},
  booktitle = {Proceedings of the 41st {{International Conference}} on {{Machine Learning}}},
  author = {Anders, Christopher J. and Nicoli, Kim Andrea and Wu, Bingting and Elosegui, Naima and Pedrielli, Samuele and Funcke, Lena and Jansen, Karl and K{\"u}hn, Stefan and Nakajima, Shinichi},
  year = 2024,
  month = jul,
  pages = {1557--1578},
  publisher = {PMLR},
  issn = {2640-3498},
  urldate = {2026-06-02},
  langid = {english}
}

@article{biamonte_quantum_2017,
  title = {Quantum Machine Learning},
  author = {Biamonte, Jacob and Wittek, Peter and Pancotti, Nicola and Rebentrost, Patrick and Wiebe, Nathan and Lloyd, Seth},
  year = 2017,
  month = sep,
  journal = {Nature},
  volume = {549},
  number = {7671},
  pages = {195--202},
  publisher = {Nature Publishing Group},
  issn = {1476-4687},
  doi = {10.1038/nature23474},
  urldate = {2025-11-25},
  copyright = {2017 Macmillan Publishers Limited, part of Springer Nature. All rights reserved.},
  langid = {english}
}

@article{caditazi_folded_2024,
  title = {Folded {{Spectrum VQE}}: {{A Quantum Computing Method}} for the {{Calculation}} of {{Molecular Excited States}}},
  shorttitle = {Folded {{Spectrum VQE}}},
  author = {Cadi Tazi, Lila and Thom, Alex J. W.},
  year = 2024,
  month = mar,
  journal = {Journal of Chemical Theory and Computation},
  volume = {20},
  number = {6},
  pages = {2491--2504},
  issn = {1549-9618, 1549-9626},
  doi = {10.1021/acs.jctc.3c01378},
  urldate = {2026-06-10},
  copyright = {https://creativecommons.org/licenses/by/4.0/},
  langid = {english}
}

@article{cai_quantum_2023,
  title = {Quantum Error Mitigation},
  author = {Cai, Zhenyu and Babbush, Ryan and Benjamin, Simon C. and Endo, Suguru and Huggins, William J. and Li, Ying and McClean, Jarrod R. and O'Brien, Thomas E.},
  year = 2023,
  month = dec,
  journal = {Reviews of Modern Physics},
  volume = {95},
  number = {4},
  pages = {045005},
  issn = {0034-6861, 1539-0756},
  doi = {10.1103/RevModPhys.95.045005},
  urldate = {2026-02-03},
  langid = {english}
}

@article{endo_optimal_2023,
  title = {Optimal Parameter Configurations for Sequential Optimization of the Variational Quantum Eigensolver},
  author = {Endo, Katsuhiro and Sato, Yuki and Raymond, Rudy and Wada, Kaito and Yamamoto, Naoki and Watanabe, Hiroshi C.},
  year = 2023,
  month = nov,
  journal = {Physical Review Research},
  volume = {5},
  number = {4},
  pages = {043136},
  issn = {2643-1564},
  doi = {10.1103/PhysRevResearch.5.043136},
  urldate = {2025-10-20},
  langid = {english}
}

@article{fauseweh_quantum_2024,
  title = {Quantum Many-Body Simulations on Digital Quantum Computers: {{State-of-the-art}} and Future Challenges},
  shorttitle = {Quantum Many-Body Simulations on Digital Quantum Computers},
  author = {Fauseweh, Benedikt},
  year = 2024,
  month = mar,
  journal = {Nature Communications},
  volume = {15},
  number = {1},
  pages = {2123},
  issn = {2041-1723},
  doi = {10.1038/s41467-024-46402-9},
  urldate = {2026-06-10},
  langid = {english}
}

@article{hillen_moments_2017,
  title = {Moments of von {{Mises}} and {{Fisher}} Distributions and Applications},
  author = {Hillen, Thomas and Painter, Kevin J and Swan, Amanda C. and Murtha, Albert D.},
  year = 2017,
  month = jun,
  journal = {Mathematical Biosciences and Engineering},
  volume = {14},
  number = {3},
  pages = {673--694},
  issn = {1551-0018},
  doi = {10.3934/mbe.2017038}
}

@misc{javadi-abhari_quantum_2024,
  title = {Quantum Computing with {{Qiskit}}},
  author = {{Javadi-Abhari}, Ali and Treinish, Matthew and Krsulich, Kevin and Wood, Christopher J. and Lishman, Jake and Gacon, Julien and Martiel, Simon and Nation, Paul D. and Bishop, Lev S. and Cross, Andrew W. and Johnson, Blake R. and Gambetta, Jay M.},
  year = 2024,
  month = jun,
  number = {arXiv:2405.08810},
  eprint = {2405.08810},
  primaryclass = {quant-ph},
  publisher = {arXiv},
  doi = {10.48550/arXiv.2405.08810},
  urldate = {2026-03-24},
  archiveprefix = {arXiv}
}

@article{jiang_error_2024,
  title = {Error Mitigation in Variational Quantum Eigensolvers Using Tailored Probabilistic Machine Learning},
  author = {Jiang, Tao and Rogers, John and Frank, Marius S. and Christiansen, Ove and Yao, Yong-Xin and Lanat{\`a}, Nicola},
  year = 2024,
  month = jul,
  journal = {Physical Review Research},
  volume = {6},
  number = {3},
  pages = {033069},
  issn = {2643-1564},
  doi = {10.1103/PhysRevResearch.6.033069},
  urldate = {2026-02-06},
  langid = {english}
}

@article{lai_interpolationbased_2026,
  title = {Interpolation-Based Coordinate Descent Method for Parameterized Quantum Circuits},
  author = {Lai, Zhijian and Hu, Jiang and Ko, Taehee and Wu, Jiayuan and An, Dong},
  year = 2026,
  month = jan,
  journal = {Communications Physics},
  volume = {9},
  number = {1},
  pages = {41},
  publisher = {Nature Publishing Group},
  issn = {2399-3650},
  doi = {10.1038/s42005-025-02473-8},
  urldate = {2026-02-17},
  copyright = {2026 The Author(s)},
  langid = {english}
}

@article{mcclean_theory_2016,
  title = {The Theory of Variational Hybrid Quantum-Classical Algorithms},
  author = {McClean, Jarrod R and Romero, Jonathan and Babbush, Ryan and {Aspuru-Guzik}, Al{\'a}n},
  year = 2016,
  month = feb,
  journal = {New Journal of Physics},
  volume = {18},
  number = {2},
  pages = {023023},
  publisher = {IOP Publishing},
  issn = {1367-2630},
  doi = {10.1088/1367-2630/18/2/023023},
  urldate = {2026-02-10},
  langid = {english}
}

@article{meth_simulating_2025,
  title = {Simulating Two-Dimensional Lattice Gauge Theories on a Qudit Quantum Computer},
  author = {Meth, Michael and Zhang, Jinglei and Haase, Jan F. and Edmunds, Claire and Postler, Lukas and Jena, Andrew J. and Steiner, Alex and Dellantonio, Luca and Blatt, Rainer and Zoller, Peter and Monz, Thomas and Schindler, Philipp and Muschik, Christine and Ringbauer, Martin},
  year = 2025,
  month = apr,
  journal = {Nature Physics},
  volume = {21},
  number = {4},
  pages = {570--576},
  publisher = {Nature Publishing Group},
  issn = {1745-2481},
  doi = {10.1038/s41567-025-02797-w},
  urldate = {2025-11-24},
  copyright = {2025 The Author(s)},
  langid = {english}
}

@article{moll_quantum_2018,
  title = {Quantum Optimization Using Variational Algorithms on Near-Term Quantum Devices},
  author = {Moll, Nikolaj and Barkoutsos, Panagiotis and Bishop, Lev S and Chow, Jerry M and Cross, Andrew and Egger, Daniel J and Filipp, Stefan and Fuhrer, Andreas and Gambetta, Jay M and Ganzhorn, Marc and Kandala, Abhinav and Mezzacapo, Antonio and M{\"u}ller, Peter and Riess, Walter and Salis, Gian and Smolin, John and Tavernelli, Ivano and Temme, Kristan},
  year = 2018,
  month = jun,
  journal = {Quantum Science and Technology},
  volume = {3},
  number = {3},
  pages = {030503},
  publisher = {IOP Publishing},
  issn = {2058-9565},
  doi = {10.1088/2058-9565/aab822},
  urldate = {2025-11-25},
  langid = {english}
}

@article{nakanishi_sequential_2020,
  title = {Sequential Minimal Optimization for Quantum-Classical Hybrid Algorithms},
  author = {Nakanishi, Ken M. and Fujii, Keisuke and Todo, Synge},
  year = 2020,
  month = oct,
  journal = {Physical Review Research},
  volume = {2},
  number = {4},
  eprint = {1903.12166},
  primaryclass = {quant-ph},
  pages = {043158},
  issn = {2643-1564},
  doi = {10.1103/PhysRevResearch.2.043158},
  urldate = {2025-09-12},
  archiveprefix = {arXiv}
}

@article{nicoli_physicsinformed_2023,
  title = {Physics-{{Informed Bayesian Optimization}} of {{Variational Quantum Circuits}}},
  author = {Nicoli, Kim and Anders, Christopher J. and Funcke, Lena and Hartung, Tobias and Jansen, Karl and K{\"u}hn, Stefan and M{\"u}ller, Klaus-Robert and Stornati, Paolo and Kessel, Pan and Nakajima, Shinichi},
  year = 2023,
  month = dec,
  journal = {Advances in Neural Information Processing Systems},
  volume = {36},
  pages = {18341--18376},
  urldate = {2026-06-02},
  langid = {english}
}

@article{ostaszewski_structure_2021,
  title = {Structure Optimization for Parameterized Quantum Circuits},
  author = {Ostaszewski, Mateusz and Grant, Edward and Benedetti, Marcello},
  year = 2021,
  month = jan,
  journal = {Quantum},
  volume = {5},
  eprint = {1905.09692},
  primaryclass = {quant-ph},
  pages = {391},
  issn = {2521-327X},
  doi = {10.22331/q-2021-01-28-391},
  urldate = {2025-10-20},
  archiveprefix = {arXiv}
}

@inproceedings{pedrielli_bayesian_2025,
  title = {Bayesian {{Parameter Shift Rules}} in {{Variational Quantum Eigensolvers}}},
  booktitle = {The {{Fourteenth International Conference}} on {{Learning Representations}}},
  author = {Pedrielli, Samuele and Anders, Christopher J. and Funcke, Lena and Jansen, Karl and Nicoli, Kim Andrea and Nakajima, Shinichi},
  year = 2025,
  month = oct,
  urldate = {2026-06-09},
  langid = {english}
}

@misc{pedrielli_bias_2026,
  title = {Bias {{Analysis}} and {{Regularization}} of {{Sequential Minimal Optimization}} in {{Variational Quantum Eigensolvers}}},
  author = {Pedrielli, Samuele and Stalschus, Frederik and K{\"u}hn, Stefan and Jansen, Karl and Nicoli, Kim A. and Nakajima, Shinichi},
  year = 2026,
  month = may,
  number = {arXiv:2605.15813},
  eprint = {2605.15813},
  primaryclass = {quant-ph},
  publisher = {arXiv},
  doi = {10.48550/arXiv.2605.15813},
  urldate = {2026-06-09},
  archiveprefix = {arXiv}
}

@article{peruzzo_variational_2014,
  title = {A Variational Eigenvalue Solver on a Photonic Quantum Processor},
  author = {Peruzzo, Alberto and McClean, Jarrod and Shadbolt, Peter and Yung, Man-Hong and Zhou, Xiao-Qi and Love, Peter J. and {Aspuru-Guzik}, Al{\'a}n and O'Brien, Jeremy L.},
  year = 2014,
  month = jul,
  journal = {Nature Communications},
  volume = {5},
  pages = {4213},
  issn = {2041-1723},
  doi = {10.1038/ncomms5213},
  urldate = {2025-11-19},
  pmcid = {PMC4124861},
  pmid = {25055053}
}

@techreport{platt_sequential_1998,
  title = {Sequential {{Minimal Optimization}}: {{A Fast Algorithm}} for {{Training Support Vector Machines}}},
  author = {Platt, John C},
  year = 1998,
  month = apr,
  institution = {Microsoft},
  langid = {english}
}

@article{smith_simulating_2019,
  title = {Simulating Quantum Many-Body Dynamics on a Current Digital Quantum Computer},
  author = {Smith, Adam and Kim, M. S. and Pollmann, Frank and Knolle, Johannes},
  year = 2019,
  month = nov,
  journal = {npj Quantum Information},
  volume = {5},
  number = {1},
  pages = {106},
  publisher = {Nature Publishing Group},
  issn = {2056-6387},
  doi = {10.1038/s41534-019-0217-0},
  urldate = {2025-11-24},
  copyright = {2019 The Author(s)},
  langid = {english}
}

@article{sung_using_2020,
  title = {Using Models to Improve Optimizers for Variational Quantum Algorithms},
  author = {Sung, Kevin J and Yao, Jiahao and Harrigan, Matthew P and Rubin, Nicholas C and Jiang, Zhang and Lin, Lin and Babbush, Ryan and McClean, Jarrod R},
  year = 2020,
  month = sep,
  journal = {Quantum Science and Technology},
  volume = {5},
  number = {4},
  pages = {044008},
  publisher = {IOP Publishing},
  issn = {2058-9565},
  doi = {10.1088/2058-9565/abb6d9},
  urldate = {2026-05-18},
  langid = {english}
}

@article{tamiya_stochastic_2022,
  title = {Stochastic Gradient Line {{Bayesian}} Optimization for Efficient Noise-Robust Optimization of Parameterized Quantum Circuits},
  author = {Tamiya, Shiro and Yamasaki, Hayata},
  year = 2022,
  month = jul,
  journal = {npj Quantum Information},
  volume = {8},
  number = {1},
  pages = {90},
  issn = {2056-6387},
  doi = {10.1038/s41534-022-00592-6},
  urldate = {2025-09-18},
  langid = {english}
}

@article{zhang_variational_2022,
  title = {Variational Quantum Eigensolvers by Variance Minimization},
  author = {Zhang, Dan-Bo and Chen, Bin-Lin and Yuan, Zhan-Hao and Yin, Tao},
  year = 2022,
  month = nov,
  journal = {Chinese Physics B},
  volume = {31},
  number = {12},
  pages = {120301},
  publisher = {{Chinese Physical Society and IOP Publishing Ltd}},
  issn = {1674-1056},
  doi = {10.1088/1674-1056/ac8a8d},
  urldate = {2026-02-10},
  langid = {english}
}

\end{document}